\pgfplotsset{compat=1.17}
\newcommand{\Z}{\mathbb{Z}}
\newcommand{\N}{\mathbb{N}}
\newcommand{\R}{\mathbb{R}}
\newcommand{\Le}{\mathscr{L}}
\newcommand{\dof}{\mathcal{M}}
\newcommand{\ind}{\nu}
\newcommand{\aceind}{\pmb{\nu}}
\newcommand{\st}{\sigma}
\newcommand\xx{\pmb{x}}
\newcommand{\Pa}{\uppi}
\newcommand\Hs{\mathscr{X}}
\newcommand\ham{\mathcal{H}}
\newcommand{\<}{\langle}
\renewcommand{\>}{\rangle}
\newcommand\dd{~{\rm d}}
\newcommand\El{E_{\rm L}}
\newcommand\EE{\mathbb{E}}
\newcommand\eb{\mathcal{T}}
\newcommand{\basis}{\phi}
\newcommand{\nus}{N_{\uparrow}}
\newcommand{\nds}{N_{\downarrow}}
\newcommand{\us}{\uparrow}
\newcommand{\ds}{\downarrow}
\newcommand{\ace}{\pmb{A}}
\newcommand{\aceD}{\mathcal{D}}
\newcommand{\aceB}{\mathcal{B}}
\newcommand{\aceI}{\mathcal{I}}
\newcommand{\param}{\pmb{c}}
\newcommand{\params}{\pmb{c}}
\newcommand{\wfs}{\Psi_{\rm s}}
\newcommand\rr{\mathbf{r}}
\newcommand{\wfv}{\Psi_{\rm V}}
\newcommand{\wfbf}{\Psi_{\rm BF}}
\newcommand\loss{\mathcal{L}}
\newcommand{\grad}{\mathbb{g}}
\newcommand{\sse}{\mathit{S}}
\newcommand{\prj}{\varPi}
\renewcommand{\le}{L}
\newcommand{\trans}{\top}
\title{A Multilevel Method for Many-Electron Schr\"{o}dinger Equations Based on the Atomic Cluster Expansion\thanks{\today
\funding{DZ and HC's work was supported by the National Key R\&D Program of China (No. 2020YFA0712900) and National Natural Science Foundation of China (No. NSFC11971066)}.
}
}
\author{Dexuan Zhou\thanks{School of Mathematical Sciences, Beijing Normal University, Beijing 100875 China
       (\email{zhoudexuan@mail.bnu.edu.cn}).}
\and Huajie Chen\thanks{School of Mathematical Sciences, Beijing Normal University, Beijing 100875 China
       (\email{chen.huajie@bnu.edu.cn}).}
\and Cheuk Hin Ho\thanks{Department of Mathematics, University of British Columbia, 1984 Mathematics Road, Vancouver, BC, Canada V6T 1Z2 
       (\email{jerryho528@math.ubc.ca}).}
\and Christoph Ortner\thanks{Department of Mathematics, University of British Columbia, 1984 Mathematics Road, Vancouver, BC, Canada V6T 1Z2 
      (\email{ortner@math.ubc.ca}).}
}
\begin{document}
\maketitle

\begin{abstract}
   The atomic cluster expansion (ACE) {\it (Drautz, 2019)} yields a highly efficient and intepretable parameterisation of symmetric polynomials that has achieved great success in modelling properties of many-particle systems.
   In the present work we extend the practical applicability of the ACE framework to the computation of many-electron wave functions.   
   To that end, we develop a customized variational Monte-Carlo algorithm that exploits the sparsity and hierarchical properties of ACE wave functions. 
   We demonstrate the feasibility on a range of proof-of-concept applications to one-dimensional systems.
\end{abstract}

\begin{keywords}
many-electron Schr\"{o}dinger equation, variational Monte Carlo, atomic cluster expansion, cascadic multilevel method
\end{keywords}

\begin{MSCcodes}
81-08, 
65C05, 
65N25 
\end{MSCcodes}

\section{Introduction}
\label{sec:introduction}
Computing the ground state of the many-body Schr\"{o}dinger equation is arguably the most fundamental problem in electronic structure calculations. 
At the same time, it is also one of the most challenging computational tasks due to two reasons: First, the unknown wave function is a function of all electron coordinates and due to the electron-electron interaction does not normally factorize. 
Secondly, the required level of accuracy to obtain meaningful information about the system of interest is usually very high. 
The leading approaches used in practice include varying compromises between cost and accuracy: 
(i) Density Functional Theory \cite{Martin2005} is able to treat thousands of particles but provides insufficient accuracy for many systems of interest, especially highly correlated systems;
(ii) Post-Hartree-Fock methods \cite{helgaker2014molecular} provide accurate descriptions of quantum systems, but the ``gold standard" methods CCSD(T) and MP2(4) methods remain prohibitively expansive for large systems; 
(iii) quantum Monte-Carlo methods \cite{becca2017qmc} obtain approximations of the ground states by means of stochastic algorithms, and are the focus of the present work. Specifically we will introduce a new variant of the variational Monte Carlo (VMC) method, employing an atomic cluster expansion~\cite{drautz2019ACE,dusson2022ACE}
for the wave function parameterisation, as suggested in \cite{drautz2022}, and a cascadic multilevel approach for the ground state computation. 

The VMC method expresses a quantum expectation value in terms of a sequence of configurations distributed according to the distribution probability for a trial wave function. The ground state solutions can be obtained by optimizing the trial wave functions to minimize the energy expectation.
The accuracy and efficiency of the VMC approach hinges on the form of the wave function ansatz, the efficiency of the sampling and parameter optimisation algorithms. The most commonly used ansatz in VMC are Jastrow-Slater wave functions \cite{szabo2012modernqc}, which have been successful in many applications, despite their accuracy being fundamentally limited.

In recent years, the parameterisation of wave functions in terms of neural network architectures has vastly expanded the design space for wave functions accessible to VMC algorithms~\cite{han2019dnn,hermann2020dnn,pfau2020Ferminet,Scherbela22}. 
Employing deep neural network architectures has resulted in far more parameters than previous works, but this has led to unprecedented accuracy among VMC approaches; see for instance the NetKet~\cite{choo2020fermionic,giuse16}, PauliNet~\cite{hermann2020dnn} and FermiNet~\cite{pfau2020Ferminet} architectures.
The success of those methods can be ascribed to the expressive power of neural networks as well as the choice of the anti-symmetric ansatz employed in those works.

The present work builds on the idea of employing machine learning methodologies, but instead of deep neural networks we exploit ideas from the atomic cluster expansion (ACE) model \cite{drautz2019ACE,dusson2022ACE} first proposed for wave functions in \cite{drautz2022}. 
In essence, this ansatz replaces the $S(N)$-equivariant deep neural networks employed in PauliNet~\cite{hermann2020dnn} with a highly sparsified symmetric polynomial. 
The ACE model was originally developed for the parameterisation of interatomic potentials where it results in accurate and highly efficient models, competing with state-of-the-art machine learning techniques \cite{lysogorskiy2021PACE}. 
As a sparse polynomial expansion, the ACE model has the advantage over neural network models that there are clearly interpretable approximation parameters: the choice of basis, the polynomial degrees and the sparsification pattern. The polynomial degree enables a theoretical and numerical study of convergence, the sparsification pattern appears to be related to correlation in the many-electron system. 
We propose and study two concrete parameterisations within the larger design space of \cite{drautz2022} based, respectively, on the Vandermonde ansatz and the backflow ansatz.
Both of those wave function models have the interesting property that they are hierarchical. Our second contribution in this work is to exploit this model hierarchy and design a cascadic multilevel VMC algorithm, which significantly accelerates and stabilizes the parameter optimization process.

In the present work we focus on demonstrating the feasibility, {\it in principle}, of using ACE parameterisations by developing a robust and performant multi-level VMC algorithm.
To that end we will restrict numerical tests to a highly simplified setting of one-dimensional electrons and soft Coulomb interactions. The application to realistic three-dimensional systems will be studied in a separate work. Because of this restriction we will also limit our description of the wave function parameterisations to one-dimensional particles but note that, at least conceptually, the generalisation to arbitrary dimension is immediate~\cite{drautz2022}.

\vskip 0.2cm
{\bf Outline.}
The remainder of this paper is organized as follows: 
In \cref{sec:schrodinger}, we recall some background about the many-electron Schr\"{o}dinger equation and the VMC method.
In \cref{sec:wfAnsatz}, we review the ACE model and then use it to construct two anti-symmetric wave function parameterisations employing, respectively, the Vandermonde and backflow transformations. In \cref{sec:multilevelVMC}, 
we review the VMC method and introduce a cascadic multilevel VMC algorithm.
In \cref{sec:numerics}, we present numerical experiments on a range of model problems.

\vskip 0.2cm
{\bf Notation.}
Throughout this paper, we will denote the cardinality of a set $A$ by $\#A$.
Let $\Z_m:=\{\sigma_1,\cdots\sigma_m\}$ be a discrete set with cardinality $m$, which is endowed with an order $\sigma_1<\cdots<\sigma_m$.
In particular, we will denote the set of spin quantum numbers by $\Z_2 := \{\uparrow, \downarrow\}$,
endowed with the order $\downarrow < \uparrow$.
For any pair $\ind_1 = (k_1, s_1)$, $\ind_2 = (k_2, s_2) \in \N\times \Z_m$, we define $\ind_1\leq \ind_2$ by the lexicographic order, i.e., $\ind_1<\ind_2$ if and only if $k_1<k_2$ or $k_1=k_2,s_1<s_2$.
For $n\in\N_+$, we will then denote by $(\N \times \Z_m)_{\rm ord}^n$ the set of all ordered $n$-tuples, i.e., 
\begin{align}
\label{nord}
(\N \times \Z_m)_{\rm ord}^n 
:= \Big\{\aceind = (\ind_1, \cdots, \ind_n) \in (\N \times \Z_m)^n :~ \ind_1\leq \ind_2\leq \cdots \leq \ind_n \Big\}. 
\end{align}

\section{The Many-body Schr\"{o}dinger Equation}
\label{sec:schrodinger} \setcounter{equation}{0} \setcounter{figure}{0}
We consider one-dimensional many-particle systems with $M$ nuclei and $N$ electrons.
Let $r_i\in\R,~\st_i\in\Z_2~(i=1,\cdots,N)$ denote the position and spin coordinates of the electrons.
The electron state of the system is described by the many-electron wave function $\Psi(\xx)$ with the $N$-electron configuration $\xx=(x_1,\cdots,x_N)$ and $x_i=(r_i,\sigma_i)\in \R\times \Z_2$.
The wave function $\Psi$ is required to satisfy the {\it anti-symmetry} condition 
\begin{eqnarray}
\label{anti}
\Psi(x_{\Pa(1)},\cdots,x_{\Pa(N)}) = (-1)^{\epsilon_\Pa}\Psi(x_1,\cdots,x_N)
\end{eqnarray}
for any permutation $\Pa$, where $\epsilon_\Pa$ is the parity of $\Pa$.
If there is no sign $(-1)^{\epsilon_\Pa}$ in \eqref{anti}, then it is called the {\it symmetry} condition.
We will denote the admissible class for the $N$-electron wave functions by 
\begin{align*}
\Hs_N := \Big\{ \Psi\in L^2\big((\R\times\Z_2)^N,\R\big) : 
\nabla_{r_i}\Psi\in L^2\big((\R\times\Z_2)^N,\R\big),
\Psi~\text{is anti-symmetric} \Big\} ,
\end{align*}
where we have restricted ourselves to considering real-valued wave functions. 
Since we will minimize a {\it Rayleigh quotient} to obtain the ground state wave function (see \eqref{E:RQ}) we have ignored the normalization constraint $\|\Psi\|_{L^2}=1$.

Let $R_I\in\R$ and $Z_I\in\Z_+~(I=1,\cdots, M)$ denote, respectively, the positions and atomic numbers of the nuclei.
Under the Born-Oppenheimer approximation, the non-relativistic Hamiltonian for the electron system is a Hermitian operator on $\Hs_N$, defined by 
\begin{eqnarray}
\label{hamiltonian:SE}
\ham := -\frac{1}{2}\sum_{i=1}^N\nabla^2_{r_i} + \sum_{i=1}^{N} v_\mathrm{ext}(r_i) + \sum_{1\leq i<j\leq N}v_\mathrm{ee}\big(|r_i-r_j|\big) ,
\end{eqnarray}
where $v_\mathrm{ext}:\R\rightarrow\R$ represents the attraction from the nuclei, and $v_\mathrm{ee}:\R\rightarrow\R$ represents the repulsion between the electrons. 
Throughout this work, we consider the soft-Coulomb interaction \cite{coe2011soft,eberly1989soft,thiele2008soft} instead of the bare Coulomb interaction,
where electron-electron repulsion and nuclear-electron attraction take the forms
\begin{eqnarray}
\label{soft-Coulomb}
v_\mathrm{ee}(u) = \frac{1}{\sqrt{1+u^2}}
\qquad{\rm and}\qquad
v_\mathrm{ext}(r) = -\sum_{I=1}^M Z_I v_\mathrm{ee}(|r-R_I|) .
\end{eqnarray}
Within this model, the ground state wave functions do not exhibit cusps present in 3d Coulomb systems. 
However, the challenges arising from the long-range interactions and strong correlation in 3d Coulomb systems remains for these 1d model systems as well \cite{stoudenmire2012one,wagner2012reference}.

The electron ground state $\Psi_0$ of the system can be determined by solving the many-electron time-independent Schr\"odinger equation
\begin{eqnarray}
\label{schrodinger}
\ham \Psi_0 = E_0 \Psi_0 , 
\end{eqnarray}
where the ground state energy $E_0$ is the lowest eigenvalue of the Hamiltonian such that the corresponding eigenfunction $\Psi_0\in\Hs_N$.
The ground state solution can also be obtained by minimizing the energy functional (or, Rayleigh quotient) with respect to the admissible wave functions
\begin{equation}
\label{E:RQ}
E_0 = \min_{\Psi\in \Hs_N} E(\Psi) 
\qquad{\rm with}\quad
E(\Psi) = \frac{\<\Psi|\ham|\Psi\>}{\<\Psi|\Psi\>}
= \frac{\int\Psi(\xx)\big(\ham\Psi\big)(\xx)\dd\xx}{\int|\Psi(\xx)|^2\dd\xx} ,
\end{equation}
where we have used Dirac's bracket and the integral with respect to the electron configuration $\xx$ means
$\displaystyle\int_{\xx\in \big(\R\times\Z_2\big)^N} := \sum_{\sigma_1\in\Z_2}\cdots\sum_{\sigma_N\in\Z_2} \int_{\R}\dd r_1\cdots\int_{\R} \dd r_N$.

In variational Monte Carlo (VMC) methods \cite{becca2017qmc}, the energy functional is reformulated as
\begin{align}
\label{E:vmc}
E(\Psi) = \int \frac{\big(\ham\Psi\big)(\xx)}{\Psi(\xx)} \frac{|\Psi(\xx)|^2}{\<\Psi|\Psi\>} \dd\xx
=: \int \El(\xx;\Psi) P(\xx;\Psi) \dd\xx 
= \EE_{\xx\sim P(\cdot;\Psi)} \big[ \El(\xx;\Psi) \big] ,
\end{align}
where the probability density $P(\xx;\Psi)$ and local energy $\El(\xx;\Psi)$ are, respectively, given by
\begin{align}
\label{vmc:P}
P(\xx;\Psi) := \frac{\big|\Psi(\xx)\big|^2}{\int\big|\Psi(\xx)\big|^2\dd\xx}
\qquad\qquad{\rm and}
\qquad\qquad\qquad\qquad\qquad\qquad\qquad
\\[1ex]
\nonumber
\El(\xx;\Psi) := \frac{\big(\ham\Psi\big)(\xx)}{\Psi(\xx)}
= -\frac{1}{2}\sum_{i=1}^N \frac{\nabla^2_{r_i}\Psi(\xx)}{\Psi(\xx)} + \sum_{i=1}^{N} v_\mathrm{ext}(r_i) + \sum_{1\leq i<j\leq N}v_\mathrm{ee}(|r_i-r_j|) .
\end{align}
The VMC formulation \eqref{E:vmc} allows us to estimate the energy functional in \eqref{E:RQ} (a high dimensional integral) by employing Markov-Chain Monte Carlo sampling.

\section{Wave function representations}
\label{sec:wfAnsatz} \setcounter{equation}{0} \setcounter{figure}{0}
There are a wide variety of choices available to parameterize wave functions, typically via transformations of $S(N)$ equivariant tensors such as the classic backflow ansatz that was recently revived in \cite{pfau2020Ferminet}. 
In this section we review and specialize a design space for wave functions where the equivariant tensors are constructed in terms of the atomic cluster expansion (ACE) \cite{bachmayr2021polynomialACE,dusson2022ACE,lysogorskiy2021PACE,van2020regularisedACE}. In the ACE framework, symmetric functions of many particles are first expanded in a standard tensor product basis, which is then heavily sparsified in a physically-motivated manner. Moreover, the symmetries are effectively exploited to obtain a highly efficient evaluation algorithm. 
We narrow down that vast ACE wave function design space proposed in~\cite{drautz2022} to two concrete practical architectures that we then implement and test in \cref{sec:numerics} on a range of preliminary benchmarks. Our main method of interest is based on the backflow transformation, but for comparison we will also introduce a simpler variant employing Vandermonde determinants.

\subsection{One-particle basis}
\label{sec:onebody}
Let $\Omega:=\R\times\Z_m$ denote an abstract one-particle configuration space, with $\R$ the space for spatial coordinates and $\Z_m$ the set for the ``spin''. 
We will expand many-body functions in terms of tensor products of a one-particle basis, i.e. a basis $\{\phi_\nu\}$ for functions $f: \Omega \to \R$. 
In general, there is significant freedom in the selection of $\phi_\nu$. Since we are focusing on one-dimensional particles and smooth wave functions, we choose the one-particle basis functions $\basis_\ind$ given by 
\begin{align}
\label{basis}    
\basis_\ind(x) :=
\basis_{(k,s)}(x) :=
\Le_k\bigg(\frac{2}{\pi}\arctan\Big(\frac{r}{L}\Big)\bigg)\cdot\chi_s(\sigma)
\quad{\rm for}~ x=(r,\sigma)\in\R\times\Z_m ,
\end{align}
where the index is a tuple $\ind=(k,s)$ with $k\in\N$ and $s\in\Z_m$,
$\Le_k$ are the standard Legendre polynomials, and $L>0$ is a length scaling constant depending on the size of region in which the atoms are spread. 
Finally, $\chi_s(\sigma)= \delta_{s \sigma}$ is a ``one-hot embedding'', that is,  $\{\chi_s\}_{s\in\Z_m}$ is a complete basis over the discrete set $\Z_m$.
The coordinate transform $r \mapsto \frac{2}{\pi}\arctan(r/L)$ transforms the space $\R$ to the domain $[-1, 1]$ of the Legendre polynomials. 
As a result of this transform, the resolution of the basis will be highest near the origin, where the atomic nuclei will be located and thus where resolution is required. 

The many-body wave functions we construct later will also be modulated by an envelope of the form $\exp(-\theta\cdot \sqrt{1+r^2})$, with $\theta$ a parameter, to ensure decay at infinity. 
To demonstrate the effect of the transform and envelope we plot the first six basis functions in \cref{fig:Tk}.
We visually confirm that this construction generates basis functions that oscillate in the ``core region'' and decay smoothly as $r\rightarrow\infty$, thus mimicking atomic orbitals used in electronic structure calculations~\cite{Martin2005}.

\begin{figure}[tbhp]
\centering
\includegraphics[width=12cm]{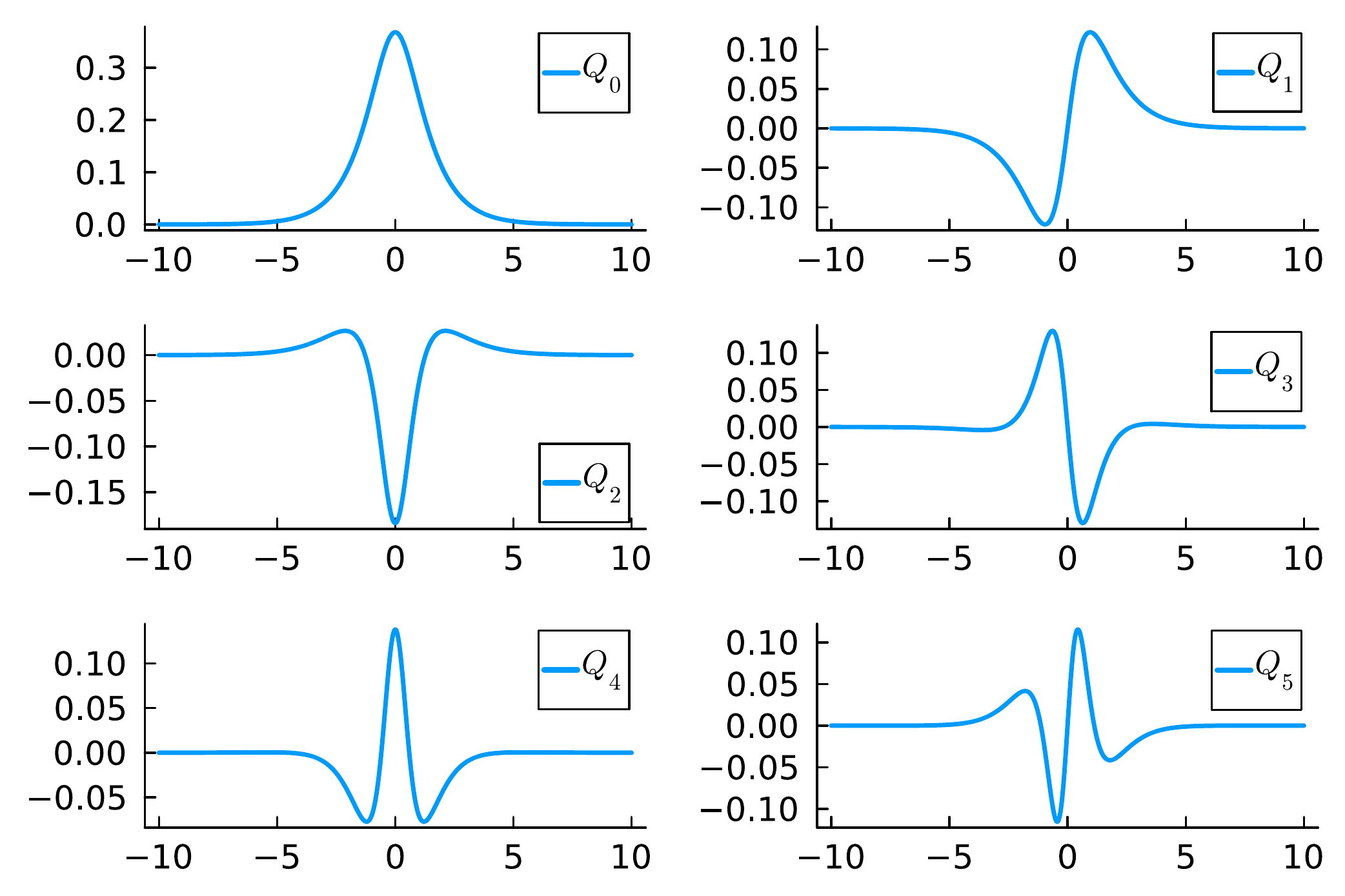}
\caption{The first six position basis functions $Q_k:=\Le_k\big(2\arctan(\cdot)/\pi\big)\cdot \exp(- \sqrt{1+r^2})$, illustrating the effect of the coordinate transform and envelope. The envelope is not part of the one-particle basis but is included here for illustration purposes. 
\label{fig:Tk}
}
\end{figure}

In practice, the global basis functions \eqref{basis} should sometimes be placed at each atomic position $R_I~(I=1,\cdots,M)$, giving rise to a new set of one-particle basis functions
\begin{align}
\label{basis_atom}
\basis_{(I,k,s)}(x) :=
\Le_k\bigg(\frac{2}{\pi}\arctan\Big(\frac{r-R_I}{L}\Big)\bigg)\cdot\chi_s(\sigma)
\quad{\rm for}~ x=(r,\sigma)\in\R\times\Z_m .
\end{align}
This construction is particularly effective when atoms are spread over a large region; see \cref{sec:numerics} for the example of Hydrogen chains when the separation distance between atoms is large.
For simplicity of presentations, we will focus on the one-particle basis set \eqref{basis} in the following, but all discussions are easily generalized to \eqref{basis_atom}.

The one-particle basis functions for the ``spin'' coordinate will play an important role in our constructions of the anti-symmetric wave functions. 
We will take $m=2$ with $\Z_2=\{\us,\ds\}$ representing the spin coordinates (see \cref{sec:vandermonde,sec:backflow}).

Finally we specify a degree for each one-particle basis function which will give us a natural approximation parameter to truncate the basis. 
Since each $\chi_s$ plays an ``equivalent'' role it is natural to associate the basis function $\basis_\ind$ with the degree of the spatial component, i.e., we define 
\begin{align}
	\label{deg}
	\deg(\basis_\ind) = \deg\big(\basis_{(k,s)}\big) := k .
\end{align}

\subsection{Sparse symmetric polynomials}
\label{sec:ace}
Our next step is to parameterize symmetric functions on $\Omega^N$, i.e., functions $f(x_1,\cdots,x_N)$, $x_j \in \Omega$, that are invariant under any permutation $\Pa \in S(N)$,
\begin{eqnarray}
	\label{symm}
	f(x_{\Pa(1)},\cdots,x_{\Pa(N)}) = f(x_1,\cdots,x_N) .
\end{eqnarray}
Without exploiting any available structure we could simply expand $f$ in terms of tensor products of the one-particle basis, i.e. 
\begin{align*}
    f \sim \sum_{\aceind \in (\N\times \Z_m)^N} c_{\aceind} \bigotimes_{t = 1}^N \phi_{\ind_t}
\end{align*}
We will perform four steps to make this approach computationally efficient even for large particle numbers $N$: (i) exploit symmetry; (ii) transform the basis into the ACE format; 
(iii) restrict the correlation order; (iv) total degree sparsification.

{\it (i) Symmetric basis: } Naively, one might construct a basis for this class by symmetrizing tensor products, 
\begin{align*}
    \phi_{\aceind}^{\rm sym} := \sum_{\Pa \in S(N)} \bigotimes_{t = 1}^N \phi_{\ind_{\Pa(t)}}
    \qquad{\rm for}~\aceind\in (\N\times \Z_m)^N_{\rm ord}. 
\end{align*}
Restricting our parameterisation to symmetric polynomials significantly reduces the number of basis functions since only ordered tuples $\aceind$ need to be considered now. 
However, due to the $O(N!)$ cost of the symmetrisation the computational cost remains the same. 

{\it (ii) The ACE formalism: } The ACE formalism constructs a basis that spans the same space as the functions $\phi_{\bm \nu}^{\rm sym}$: 
for $\aceind = (\ind_1, \dots, \ind_N) \in (\N\times \Z_m)^N_{\rm ord}$ we define  
\begin{equation}
	\label{basis:ace}
    \begin{split}
	\ace_{\aceind}(\xx) = \ace_{\ind_1, \dots, \ind_N}(x_1,\cdots,x_N)
	&:= \prod_{t=1}^N A_{\ind_t}(x_1,\cdots,x_N), \qquad \text{where} \\ 
    A_\ind(x_1,\cdots,x_N) &:= \sum_{n=1}^N \basis_\nu(x_n).
    \end{split}
\end{equation}
The $A_\ind$'s are best thought of as an generalisations of the power sum polynomials, a classical concept from invariant theory. This construction replaced the $O(N!)$ scaling cost of the naive symmetrized basis with an $O(N)$ cost for the pooling operation and an $O(1)$ cost per basis function via a recursive evaluation algorithm~\cite{bachmayr2021polynomialACE,dusson2022ACE,Kaliuzhnyi22}.
It is shown in \cite[Lemma 2.1]{bachmayr2021polynomialACE} that 
the set 
\begin{align*}
\Big\{ \ace_{\aceind} :~ \aceind\in (\N\times \Z_m)^N_{\rm ord} \Big\}
\end{align*}
forms a complete basis for the tensor product space of symmetric polynomials and the discrete space.

{\it (iii) Truncated cluster expansion: }
Our most important sparsification step is physically motivated and arises through truncating a cluster expansion. This step is analogous to the classical ANOVA expansion \cite{efron1981Anova,griebel2010Anova}. 

To that end we 
let $0<\aceB\leq N$, and expand the definition of $\ace_{\aceind}$ to index-tuples $\aceind = (\nu_1, \dots, \nu_\aceB) \in (\N\times\Z_m)_{\rm ord}^{\aceB}$
via the analogous expression, 
\begin{equation} 
\label{eq:ncorr}
    \ace_{\ind_1, \dots, \ind_\aceB}(x_1,\cdots,x_N) := \prod_{t=1}^{\aceB} A_{\ind_t}(x_1,\cdots,x_N) .
\end{equation}
These new basis functions are still functions of $x_1, \dots, x_N$ but only include products involving 
up to $\aceB$ particles at a time. We therefore call them $\aceB$-correlations.
A truncated cluster expansion for a symmetric target function $f$ can formally be written as 
\begin{align}
\label{ace:expansion}
f \sim 
\sum_{\aceind \in (\N\times\Z_m)_{\rm ord}^{\aceB}}
c_{\aceind} \ace_{\aceind}.
\end{align}
Placing an upper bound $\aceB$ on the correlation order results in a significantly lower-dimensional model. 

For future reference we denote the {\it spatial correlation order} of the basis function $\ace_{\aceind}$ by
\begin{align*}
    \|\aceind\|_0
    := \#\Big\{t \big| \deg(\basis_{\ind_t}) \neq 0\Big\}
    \qquad{\rm for}~\aceind\in (\N\times\Z_m)_{\rm ord}^{\aceB} .
\end{align*}

\begin{remark}[Cluster Expansion and Linear Dependence]
Slightly different from \eqref{ace:expansion}, and more in line with previous works on the atomic cluster expansion~\cite{drautz2019ACE,dusson2022ACE,drautz2022},
a cluster expansion can also be written as
\begin{align}
\label{ace:expansion:2}
f \sim 
\sum_{\ell=1}^{\aceB} \sum_{\aceind \in (\N\times\Z_m)_{\rm ord}^{\ell}}
c_{\aceind} \ace_{\aceind}
\qquad{\rm with}~~
\ace_{\aceind}(\pmb{x}) = \ace_{\ind_1, \dots, \ind_\ell}(\pmb{x}) := \prod_{t=1}^{\ell} A_{\ind_t}(\pmb{x})
\end{align}
where the basis functions are $\ell$-correlations with $\ell\leq\aceB$. It is important to note that neither \eqref{eq:ncorr} nor \eqref{ace:expansion:2} are subsets of the original symmetric basis \eqref{basis:ace} (the $N$-correlations). However they belong to the same space of symmetric polynomials. 
Secondly, $\ell$-correlations of different order $\ell$ in \eqref{ace:expansion:2} are {\em not} linearly independent, due to the fact that 
\begin{align}
    \label{embeding}
    1 \equiv (A_{0,\uparrow} + A_{0,\downarrow})/N.
\end{align}
Using this fact, we can write all $\ell$-correlations as a short linear combination of $n$-correlations with $n > \ell$. 
Although no longer a basis, it is straightforward to show that the $\ell$-correlations in \eqref{ace:expansion:2} satisfy a frame property, and numerical tests suggest that all our results can be reproduced with \eqref{ace:expansion:2}. For the sake of clarify of presentation we will from now on only focus on the formulation given in \eqref{ace:expansion}. 
\end{remark}

{\it (iv) Total degree sparsification: } 
The leading order term of the basis function $\ace_{\aceind}$ in \eqref{eq:ncorr} is precisely the symmetrisation of $\otimes_{t=1}^{\aceB} \phi_{\ind_t}$, and we therefore assign it the total degree 
\begin{eqnarray}
	\label{norm1}
	\deg(\ace_{\aceind}) 
	:= \|\aceind\|_1
	:= \sum_{t=1}^{\aceB} \deg(\basis_{\ind_t}) 
	\qquad{\rm for}~\aceind\in (\N\times\Z_m)_{\rm ord}^{\aceB} ,
\end{eqnarray}
where the degree $\deg(\basis_{\ind})$ of one-particle basis functions is given by \eqref{deg}. 
Truncating the many-body basis $\ace_{\aceind}$ by specifying a maximal total degree $\aceD\in\N_+$
\begin{align*}
\| \aceind \|_1 \leq \mathcal{D}, 
\end{align*}
leads to a significant sparsification of the basis that is appropriate for analytic target functions~\cite{griebel2007sparseSchrodinger,smolyak1963sparse}. It is shown in~\cite{bachmayr2021polynomialACE} (see also \eqref{eq:basissize}) that it is particularly effective when used in conjunction with symmetrisation where it significantly alleviates the curse of dimensionality of a conventional discretization using full tensor product grids.

By allowing the maximum total degree to depend on the spatial correlation order $\|\aceind\|_0$ of the basis function we can tune the required resolution for different spatial correlation orders. This is convenient since it is natural to expect that higher correlation order terms contribute less to the total energies and therefore require less resolution.
To do this, we specify a tuple $\pmb{\aceD} = (\aceD_\ell)_{\ell = 1}^{\aceB} \in \N_+^{\aceB}$.
Then the corresponding set of basis function indices is given by 
\begin{align}
\label{ace:index:set}
\aceI(\aceB,\pmb{\aceD}) := \Big\{ \aceind\in (\N\times\Z_m)_{\rm ord}^{\aceB} ~: ~~
\|\aceind\|_1 \leq \aceD_{\ell} ~{\rm if}~ \|\aceind\|_0=\ell
\Big\}, 
\end{align}
which results in a parameterisation of an $N$-variable symmetric function $f$ by
\begin{align}
	\label{ACE:approx}
	& f(\xx) \approx f_{\aceB,\pmb{\aceD}}(\xx)
	\in ~V^N_{\aceB,\pmb{\aceD}}~
	:= {\rm span} \Big\{ \ace_{\aceind} : \aceind\in \aceI(\aceB,\pmb{\aceD}) \Big\}, 
	\\[1ex]
    \notag 
    & 
    f_{\aceB,\pmb{\aceD}}(\xx)
    = \sum_{\aceind\in \aceI(\aceB,\pmb{\aceD})} c_{\aceind} \ace_{\aceind},
\end{align}
with $\params:=\{c_{\aceind}\}_{\aceind\in \aceI(\aceB,\pmb{\aceD})}$ the coefficients of the expansion.

The main result of \cite{bachmayr2021polynomialACE} estimates the number of parameters of the representation \eqref{ACE:approx} by 
\begin{equation}
\label{eq:basissize}
{\rm dim}\big(V^N_{\aceB,\pmb{\aceD}}\big)
\lesssim 
\begin{cases}
{m^{\aceB}\sum_{\ell = 1}^{\aceB} D_\ell^\ell / (\ell!)^2 } & \text{as} ~ \min_\ell D_\ell \rightarrow \infty,
\\[1ex]
{m^\aceB \sum_{\ell=1}^{\aceB}  \exp\big(\pi \sqrt{\frac{4}{3}D_\ell}\big)}
& \text{uniformly}~\text{in}~N,\aceB,\pmb{\aceD}.
\end{cases}
\end{equation}
An unsurprising but important observation is that the number of parameters depends only on the correlation order $\aceB$ and not on the number of particles $N$. This means that the curse of dimensionality only enters if a high correlation order is required to resolve the target functions. 
Even then, the permutation invariance significantly ameliorates the effect: 
In our setting we have $m = 2$ hence the second bound depends only very mildly on the correlation order $\aceB$, which suggests that even high correlation orders are easily tractable within our framework. 
Finally, we note that one can often expect that less resolution is required for higher correlation order terms, i.e. that we can take $D_\ell$ strictly decreasing and this can further alleviate the computational cost. 
For example, when computing wave functions we expect that the Hartree--Fock model, which corresponds to $\aceB = 1$, already resolves a significant contribution to the exact wave function, and subsequent corrections are relatively ``small''. 

Explaining these heuristic ideas rigorously goes far beyond the present work, but we refer to \cite{thomas2022ordACE} for a concrete and rigorous example in the context of coarse-graining electronic structure models. There, it is possible to balance all approximation parameters to obtain an explicit super-algebraic best approximation rate \cite{bachmayr2021polynomialACE}.

\subsection{Anti-symmetry via Vandermonde determinants}
\label{sec:vandermonde}
Having established our approach to parameterising high-dimensional symmetric functions we now return to the original task of parameterising an anti-symmetric many-electron wave function $\Psi(x_1,\cdots,x_N)$ with $x_i=(r_i,\sigma_i)$ having both space and spin coordinates.
The first approach we consider, which we label the {\it Vandermonde ansatz}, is based on the well-known fact that every anti-symmetric polynomial is divisible by the Vandermonde product, yielding a {\it symmetric polynomial} in the result \cite{cauchy1815vand}. 

We express a general wave function as
\begin{align}
\label{wf:vand}
\wfv(\xx) := \Phi(\xx) \cdot \mathcal{V}(\xx),
\end{align}
where $\Phi$ is symmetric with respect to the interchange of position, spin pairs, and $\mathcal{V}$ is a determinant with the ``monomial" $v_i~(1\leq i\leq N)$ depending on electron configuration $\xx$ through the number of spin up particles $N_{\xx\us}:=\#\big\{\sigma_i:\sigma_i=\us \big\}$,
\begin{align}
\label{determinant:V}
& \mathcal{V}(\xx) := \det     \begin{pmatrix}
		v_1(x_1) & \cdots & v_N(x_1)
		\\
		\vdots & \ddots & \vdots 
		\\
		v_1(x_N) & \cdots & v_N(x_N) 
\end{pmatrix} 
\\[1ex]  \nonumber
& v_i(x_j) := v_i(r_j,\sigma_j;\xx) := 
\begin{cases}
        r_j^{i-1}\cdot \delta_{\sigma_j,\us}, & i \leq N_{\xx\us}
        \\[1ex]
        r_j^{i-1 - N_{\xx\us}}\cdot \delta_{\sigma_j, \ds}, & N_{\xx\us}+1 \leq i \leq N 
\end{cases}. 
\end{align}

The symmetric prefactor $\Phi$ can be parameterized by the ACE model discussed in \cref{sec:ace}. 
For given correlation-order $\aceB\in\N$ and degrees $\pmb{\aceD}\in\N^{\aceB}$, we parameterise it as $\Phi(\xx) = f_{\aceB,\pmb{\aceD}}\big(\xx)$, where and $f_{\aceB,\pmb{\aceD}}$ is given by  \eqref{ACE:approx}.
Thus, the final form of the ACE-Vandermonde ansatz becomes
\begin{equation}
    \label{ace:vandfinal}
    \wfv(\xx) := f_{\aceB,\pmb{\aceD}}\big(\xx) \cdot \mathcal{V}(\xx).
\end{equation}

\begin{remark}[Conventional Vandermonde determinant]
If we permute the electron configurations and let the first $N_{\xx\us}$ particles spin up, 
\begin{equation*}
\xx = \big( r_1,\us; \cdots; r_{N_{\xx\us}},\us; r_{N_{\xx\us}+1},\ds; \cdots; r_N,\ds \big),
\end{equation*}
then \eqref{determinant:V} is reduced to a well-known form of Vandermonde determinant
\begin{equation}
\label{det:V:b}
\mathcal{V}(\xx) =  \prod_{\substack{1 \leq i < j \leq N \\ \sigma_i = \sigma_j = \us}} 
(r_i-r_j) \cdot 
\prod_{\substack{1 \leq i < j \leq N \\ \sigma_i = \sigma_j = \ds}} 
(r_i-r_j).
\end{equation}
Though more convenient in practice, using \eqref{det:V:b} directly in \eqref{wf:vand} will not lead to an anti-symmetric wave function, see \cref{sec:spinassign} for more discussions.
\end{remark}

\begin{remark}[Completeness]
Using similar arguments as those in \cite{cauchy1815vand,han2019dnn} one can prove that the Vandermonde ansatz \eqref{wf:vand} is ``complete''.
By this statement we mean that any anti-symmetric $N$-variable polynomial can be written in the form \eqref{wf:vand} and \eqref{ace:vandfinal}.
One can moreover show that, if an anti-symmetric function is sufficiently smooth, then it can be written exactly in the form of \eqref{wf:vand}, where the symmetric prefactor $\Phi$ is still smooth and thus can be well approximated by an ACE polynomial. We make these statements precise in \cref{theorem:vandermonde} in \cref{sec:proofs}, following the analysis in \cite[Theorem 3 and 4]{hutter2020representanti}. 

For particles in dimension greater than one, the naive Vandermonde ansatz \eqref{wf:vand} (with a suitable generalisation of the notion of Vandermonde determinant) is {\it not} complete, but instead one requires a series expansion. We refer to \cite{han2019dnn} for further details. 
\end{remark}

\begin{remark}[Numerical stability]
\label{remark:slater}
In our numerical tests with the Vandermonde ansatz, we observed numerical instabilities especially when the number of electrons is large.
This is not unexpected since the Vandermonde determinant employs monomials as matrix elements.
To resolve this, one can replace the Vandermonde determinant with a Slater determinant which leads to a generalisation of the Hartree--Fock model, or a special case of the Backflow transformation which we will explore in next subsection. 
Alternatively, a carefully designed optimization algorithm can also overcome these instabilities; here we refer to the cascadic multilevel solver we develop in \cref{sec:cascadic}. 
\end{remark}

\subsection{Anti-symmetry via backflow transformation}
\label{sec:backflow}

The backflow transformation was originally proposed by Feynman and Cohen \cite{feynman1956backflow}. It has recently seen great success when combined with deep neural network architectures~\cite{hermann2020dnn,pfau2020Ferminet,Scherbela22}. 
The backflow ansatz can be viewed as a generalization of the Slater determinant, 
\begin{align}
\label{hartree-fock}
	\Psi(x_1,\cdots,x_N) \approx \det \begin{pmatrix}
		\varphi_1(x_1) & \cdots &\varphi_N(x_1)
		\\
		\vdots & \ddots &\vdots
		\\
		\varphi_1(x_N) & \cdots & \varphi_N(x_N)
	\end{pmatrix} ,
\end{align}
where $\varphi_k~(1\leq k\leq N)$ are orthonormal one-particle orbitals.
This is the most well-known choice and most common choice to approximate the many-electron wave function, and leads to the celebrated Hartree--Fock method \cite{helgaker2014molecular}.
The idea of the backflow ansatz is to replace the one-particle orbitals $\varphi_k$ with multi-variable functions; i.e., 
\begin{align}
\label{bf:ace:1}
\wfbf(\xx) := 
	\det \begin{pmatrix}
		\varphi_1(x_1;\xx_{\neq 1}) & \cdots &\varphi_{N}(x_1;\xx_{\neq 1})\\
		\vdots & \ddots &\vdots\\
		\varphi_1(x_{N};\xx_{\neq N})  & \cdots &\varphi_{N}(x_{N};\xx_{\neq N}) 
	\end{pmatrix} 
\end{align}
where $\xx_{\neq i}:= \{ x_k \}_{k \neq i}$. The notation $\varphi_j(x_i; \xx_{\neq i})$ indicates that these generalized orbitals satisfy the ``partial'' symmetry 
\begin{align}
\label{bf:ace:3}
	\varphi_j\big(x_i;\{ x_{\Pa (k)} \}_{k \neq i} \big) 
    = \varphi_j\big(x_i; \xx_{\neq i} \big)
    \qquad \forall \Pa \in S(N-1).
\end{align}
An equivalent way to state this is to say that the vector $(\varphi_1, \dots, \varphi_N)$ is permutation-covariant. 
It is straightforward to see that the functional form \eqref{bf:ace:1} again satisfies the anti-symmetry \eqref{anti}.

To complete the ansatz \eqref{bf:ace:1}, we must specify how to parameterize the $N$-particle orbitals $\varphi_j$ in terms of the ACE formalism.
To that end, we modify the basis functions~\eqref{basis:ace}, separating out the ``highlighted'' particle $x_i$,
\begin{align*}
    \ace^{\rm BF}_{\aceind}(x_i; \{ x_k\}_{k \neq i} )
	:= \phi_{\ind_1}(x_i)\prod_{t=2}^{N} A_{\ind_t}( \{x_k\}_{k \neq i})
	\qquad{\rm for}~x_i, x_k\in\R\times\Z_2, 
\end{align*}
where $\aceind=(\ind_1; \ind_2,\cdots,\ind_N)\in (\N\times\Z_2)\times(\N\times\Z_2)_{\rm ord}^{N-1}$ and $\phi_{\ind}$ is the one-particle basis function (cf. \eqref{basis}). 
As a tensor product, it follows immediately that $\ace^{\rm BF}_{\aceind}$ form a complete basis of functions satisfying the partial symmetry \eqref{bf:ace:3}.

We now select a maximum correlation order $\aceB \geq 1$ and degrees $\pmb{\aceD} = (\aceD_\ell)_{\ell = 1}^{\aceB}$, and ACE basis for backflow orbitals, analogous to \eqref{eq:ncorr}, as
\begin{align}
    \label{basis:ace:bf:B}
    \ace^{\rm BF}_{\aceind}\big(x_i; \{ x_k\}_{k \neq i}\big)
	:= \left\{
	\begin{array}{ll}
	\phi_{\ind_1}(x_i) & {\rm if}~\aceB=1
	\\[1ex]
	\phi_{\ind_1}(x_i)\prod_{t=2}^{\aceB} A_{\ind_t}( \{x_k\}_{k \neq i}) & {\rm if}~\aceB\geq 2
	\end{array}
	\right.
\end{align}
for $x_i, x_k\in\R\times\Z_2$,
where $\aceind$ belongs to the basis index set
\begin{align*}
    \aceI^{\rm BF}(\aceB, \pmb{\aceD}) := 
    \Big\{
        \aceind \in    (\N\times\Z_2)\times(\N\times\Z_2)_{\rm ord}^{\aceB-1} ~:~~
        \|\aceind\|_1 \leq \aceD_{\ell}~{\rm if}~\|\aceind\|_0 = \ell
    \Big\}.
\end{align*}

We can now parameterise $\varphi_j$ using the basis functions in \eqref{basis:ace:bf:B}, 
\begin{equation}
	\label{ace:Phi:bf}
    \begin{split}
	 \varphi_j(x_i;\xx_{\neq i})
    &= f^{(j)}_{\aceB,\pmb{\aceD}}\big(x_i; \xx_{\neq i} \big)
    := \sum\limits_{\aceind\in \aceI^{\rm BF}(\aceB,\pmb{\aceD})} c^{(j)}_{\aceind} \ace^{\rm BF}_{\aceind}(x_i; \xx_{\neq i}).
    \end{split}
\end{equation}
Therefore, the final form of the ACE-backflow ansatz becomes
\begin{align}
\label{ace:bf:final}
\wfbf(\xx) := 
	\det \begin{pmatrix}
		f^{(1)}_{\aceB,\pmb{\aceD}}(x_1;\xx_{\neq 1}) & \cdots & f^{(N)}_{\aceB,\pmb{\aceD}}(x_1;\xx_{\neq 1})\\
		\vdots & \ddots &\vdots\\
		f^{(1)}_{\aceB,\pmb{\aceD}}(x_{N};\xx_{\neq N})  & \cdots & f^{(N)}_{\aceB,\pmb{\aceD}}(x_{N};\xx_{\neq N}) 
	\end{pmatrix} .
\end{align}
This completes our specification of the ACE-backflow wavefunction ansatz. 

\begin{remark}[Completeness]
    It is shown in \cite[Theorem 2 and 7]{hutter2020representanti} that the backflow ansatz strictly generalizes the Vandermonde determinants by absorbing the symmetric part into the determinant (see also \cref{theorem:backflow} in \cref{sec:proofs} for details).
    Therefore, the backflow ansatz gives also a ``complete'' representation of anti-symmetric polynomials for 1d systems.
    However, this may be false in higher dimensions. In \cite{huang2021geometrybackflow} it is argued that the totally anti-symmetric polynomials cannot be efficiently represented by the backflow ansatz in the category of polynomials, but that a sum over exponentially many determinants of this form is necessary.
    Note, however, that this does not prevent the backflow ansatz from having excellent approximation properties also for higher-dimensional particles. It merely highlights that understanding its approximation properties rigorously is likely challenging. 
\end{remark}

\begin{remark}[Implementation note]
    The general ACE framework leads to a convenient implementation of the backflow basis $\ace^{\rm BF}_{\aceind}$. In order to ``signal'' that exchanging $x_i, x_j$ is not a symmetry operation, one simply introduces an additional artificial spin category $\emptyset$ and modifies the particles as follows: 
    \[
        x_i' := (r_i, \emptyset), \quad x_j' := x_j \quad \text{for } j \neq i. 
    \]
    Letting $\ace_{\aceind}'$ be the standard ACE basis but now for the particle space $\R \times \Z_3$ with $\Z_3$ representing the categories $\us, \ds, \emptyset$, one can now simply take $\ace^{\rm BF}_{\aceind}(x_i, \xx_{\neq i}) := \ace_{\aceind}'(\xx')$. With this strategy, a standard ACE implementation can be employed to account for the partial symmetry. 
\end{remark}

\subsection{Spin assigned wave functions}
\label{sec:spinassign}
Since the Hamiltonian \eqref{hamiltonian:SE} is a spin-independent operator, it is convenient to work with spin-assigned wave functions \cite{foulkes2001qmc,Martin2016Interactingelectrons}.
More precisely, we consider a system with $\nus$ spin up electrons and $\nds$ spin down electrons, where $N=\nus+\nds$ gives the total number of electrons and $S_z=\frac{1}{2}(\nus-\nds)$ the spin polarization.
We can then rewrite a general wave function for such a spin-assigned system in terms of the space coordinates alone, 
\begin{align}
\label{spin-assigned}
\Psi(x_1,\cdots,x_N) = \Psi(r_1,\us,\cdots,r_{\nus},\us,r_{\nus+1},\ds,\cdots,r_N,\ds) 
=: \wfs(\rr^{\us},\rr^{\ds}),
\end{align}
where we have relabeled the electron indices so that the first $\nus$ electrons are spin up with positions $\rr^{\us}=(r_1,\cdots,r_{\nus})$ and the remaining $\nds$ electrons are spin down with positions $\rr^{\ds}=(r_{\nus+1},\cdots,r_N)$.

Then we can approximate the spin-assigned wave function by
\begin{align}
\label{wf:spin:ass}
\wfs(\rr^{\us},\rr^{\ds})
& = \Psi_{\rm V}\big(x_1, \dots, x_N) ~~{\rm or}~~ \Psi_{\rm BF}\big(x_1, \dots, x_N), 
\qquad \text{where} 
\\[1ex]
\nonumber
& \quad 
x_i = \begin{cases}
    (r_i, \us), & i = 1, \dots, \nus, \\ 
    (r_i, \ds), & i = 1, \dots, \nds, 
\end{cases} 
\end{align}
and $\Psi_{\rm V}$, $\Psi_{\rm BF}$ were constructed in \cref{sec:vandermonde}, \cref{sec:backflow} respectively. 

An important advantage of fixing the spins in the trial wave function ansatz, which we will see immediately in the next section, is that we only have to sample the configurations of space coordinates $(r_1,\cdots,r_N)\in\R^N$ in the variational Monte Carlo method.

Note that the canonical extension of $\wfs$ to $x_i$ coordinates defined through \eqref{wf:spin:ass} is already fully anti-symmetric and requires no additional anti-symmetrisation step.

\section{Cascadic multilevel optimization}
\label{sec:multilevelVMC} \setcounter{equation}{0} \setcounter{figure}{0}
Using the ACE ansatz \eqref{ace:vandfinal} or \eqref{ace:bf:final}, the many-electron wave function is parameterized as $\Psi \approx \Psi_{\param}$, where $\param\in\R^\dof$ is the collection of all ACE parameters with $\dof$ the total number of parameters. The ground state energy \eqref{E:RQ} can be approximated by minimizing a ``loss'', the Rayleigh quotient, with respect to the wave function parameters, 
\begin{align}
\label{E:parm:L}
\min_{\param\in\R^\dof} \loss(\param)
\qquad{\rm with}\qquad
\loss(\param) := E(\Psi_{\param}) = \frac{\<\Psi_{\param}|\ham|\Psi_{\param}\>}{\<\Psi_{\param}|\Psi_{\param}\>} .
\end{align}
We will discuss optimization strategies to minimize $\loss(\param)$ and design a {\it cascadic multilevel method}, inspired by \cite{braess1999cascadic,deuflhard1994cascadic,haber18a,haber18b,kimmel03,shaidurov1996cascadic}, that is efficient for our hierarchical ACE-based wave function parameterizations. 
Our multilevel approach can in principle be applied to any wave function parameterization that possesses a hierarchical structure, see e.g. \cite{Glielmo20}.

\subsection{The VMC method}
\label{sec:vmcGradient}
The evaluation of the loss $\loss(\param)$ in \eqref{E:parm:L} involves a potentially high-dimensional integral.
Within the VMC framework, this integral is typically estimated by taking samples drawn from the distribution $P(\cdot,\Psi_{\param})$ ($P$ and the local energy $E_L$ are defined in \eqref{vmc:P})
\begin{align}
\label{loss}
\loss(\param) 
\nonumber
& := E(\Psi_{\param})
= \EE_{\xx\sim P(\cdot;\Psi_{\param})} \big[ \El(\xx;\Psi_{\param}) \big]
\\[1ex]
& ~\approx \frac{1}{n}\sum_{\xx\in\sse_n} \El(\xx;\Psi_{\param}) 
=: \loss_n(\param) .
\end{align}
Here, the set $\sse_n$ for electron configurations is a set of $n$ samples $\xx \in \sse_n$, $\xx \sim P(\cdot;\Psi_{\param})$. These samples are generated using Markov-Chain Monte Carlo methods, with $n$ being the number of samplers.
Since we are working with the spin-assigned wave functions \eqref{wf:spin:ass} in practical simulations, it is only necessary for us to sample the electron configurations with space coordinates in $\R^n$ as the spin coordinates have been fixed.

In practice, we use the standard Metropolis-Hastings algorithm \cite{foulkes2001qmc}.
We have also compared this with the standard Langevin Monte Carlo approach \cite{Martin2016Interactingelectrons} and find a similar performance for sampling.

\begin{remark}[Implementation note]
We use $n=2000$ samples in practice and run 10 steps of Metropolis-Hasting sampling each time after the updates of parameters. 
The proposed moves are Gaussian distribution with an isotropic covariance chosen on the fly, such that the acceptance rates can be kept in the target range [45\%,55\%].
We maintain this ratio through a simple scheme that increases the step width by a small amount if the acceptance ratio strays too far above the target, and decreases it by a small amount if the ratio strays too far below the target.
\end{remark}

The optimization of minimizing $\loss(\param)$ can be achieved by a stochastic gradient descent (SGD) iteration
\begin{align}
\label{sgd}
\param_{k+1} = \param_k - \eta_k \cdot \grad_n(\param_k) ,
\end{align}
where $\eta_k>0$ is the learning rate
and $\grad_n(\param)$ is an approximation of the gradient $\nabla_{\param}\loss(\param)$  estimated by using a set of samples $\sse_n$ from the distribution $P(\cdot,\Psi_{\param})$
\begin{align}
\label{gradient}
\nonumber
\nabla_{\param}\loss(\param)
& = 2~\EE_{\xx\sim P(\cdot;\Psi_{\param})} \big[ \big(\partial_{\param}\log|\Psi_{\param}(\xx)|\big) \big(\El(\xx;\Psi_{\param}) - \loss(\param)\big) \big]
\\[1ex]
& \approx \frac{2}{n}\sum_{\xx\in\sse_n} \big(\partial_{\param}\log|\Psi_{\param}(\xx)|\big) \big(\El(\xx;\Psi_{\param}) - \loss_n(\param)\big)
=: \grad_n(\param) .
\end{align}
This formulation gives an unbiased estimate of the gradient of the loss.
The derivation of the gradient expression can be found in \cite{lin2021explicitly}.
We refer to \cite{abra23,liwenchen23} for two recent works on the convergence analysis of the SGD iteration \eqref{sgd}.

In \cref{algorithm-sgd} we recall the standard VMC algorithm for solving \eqref{E:parm:L}.

\begin{algorithm}[H]
	\caption{~~ Standard VMC algorithm}
    \label{algorithm-sgd}
	\vskip 0.1cm
	\hspace*{0.02in} 
	{\bf Input:} 
	wave function parameterisation $\Psi_{\param}$; initial guess $\params_0\in\R^{\dof}$; iterations $T\in\Z_+$.
	\begin{algorithmic}[1]
	  \For{$k=0, 1, \dots, T$}
	  \State
	  Generate a set $S_n$ of $n$ samples $\xx \in \sse_n$,  $\xx \sim P(\cdot,\Psi_{\param_k})$ (e.g., via MCMC).
	  \State
	  Use \eqref{gradient} to approximate gradients. 
      \State 
      Use \eqref{sgd} to update the parameters, $\params_k\mapsto\params_{k+1}$.
      \EndFor
	\end{algorithmic}
	\hspace*{0.02in} 
	{\bf Output:} 
	$\params = \params_T$ and the corresponding energy $\loss(\params)$.
\end{algorithm}

Different approaches have been developed to accelerate the convergence of the optimization.
For example, the so-called stochastic reconfiguration method (or equivalently the natural gradient method) \cite{sorella1998sr,sorella2001Lanczos} preconditions the stochastic gradient by the inverse of the Fisher information matrix; 
the linear methods \cite{toulouse2007linearmethod} involves the second order derivatives of the loss; 
and the Kronecker Factorized Approximate Curvature (KFAC) method \cite{martens2015kfac} uses a Kronecker-factored approximation of the Fisher information matrix.
We are not going to explore these techniques in this work. 
Instead, we apply a weighted Adam (AdamW) scheme \cite{Loshchilov2018ADAMW} and develop a cascadic multilevel method (see next subsection) to accelerate the convergence of  \cref{algorithm-sgd}.
We use a decaying learning rate $\eta_k=\alpha/(1+k/\beta)$ with fixed parameters $\alpha$ and $\beta$.

\subsection{The cascadic multilevel method}
\label{sec:cascadic}
When the ACE-based wave function parameterisations \eqref{ace:vandfinal} and \eqref{ace:bf:final} are employed with relatively large correlation order and polynomial degrees, the standard VMC algorithm (\cref{algorithm-sgd}) has a slow convergence to the ground state due to the ill-conditioning in the loss function.
We propose a cascadic multilevel algorithm that exploits the intrinsic hierarchy of the ACE model and accelerates the optimization process significantly.
The cascadic multilevel method was originally proposed in \cite{bornemann1996cascadic} for second-order elliptic problems, but it is natural to apply the idea to many other hierarchical parameterisations.
We refer to \cite{braess1999cascadic,deuflhard1994cascadic,shaidurov1996cascadic} for more discussions and applications of this multilevel method. 
The idea has been explored in training neural networks, using more iterations at ``coarser scales'' to obtain good starting guesses for the ``finer scales'' and avoid being trapped in local minima \cite{haber18a,haber18b,kimmel03}.
Similar ideas have also been applied to optimizing Gaussian process state wave functions for quantum many-body problems \cite{Glielmo20}, which derive a linear combination of all possible multi-site correlation features in the implicit parametrization in the form of an ``additive'' kernel, and hierarchically includes all lower-rank features.

We remark that our method is not to be confused with ``multilevel Monte Carlo'' (MLMC) methods \cite{giles15}. 
The idea of the MLMC methods is, when computing expectations in stochastic simulations, to take most samples at a low accuracy with a relatively low cost, and take only very few samples at high accuracy with a correspondingly high cost.
MLMC methods also exploit model hierarchies, but its goal is to accelerate the sampling whereas our multilevel method is to accelerate the parameter optimization but using a naive sampler at each step.
MLMC methods cannot be easily incorporated into our VMC algorithm since the distribution from which we draw samples is different at each level.    

For a given electron number $N$, let 
\begin{align}
	\label{multilevel:spaces}
	X_0\subsetneq X_1\subsetneq \cdots\subsetneq X_j\subsetneq \cdots \subsetneq X_L
\end{align}
be a sequence of (strictly) nested multilevel function classes for many-electron wave functions. At each level $j$ an element of $X_j$ is written as $\Psi_j(\cdot; \param^{(j)})$ where the parameters $\param^{(j)}$ are unconstrained in a vector space $\mathbb{R}^{\mathcal{M}_j}$. 

The nestedness of the parameterisations \eqref{multilevel:spaces} implies that there exists a prolongation operator 
\begin{equation}
    \Pi_{j, j+1} : \mathbb{R}^{\mathcal{M}_j} \to \mathbb{R}^{\mathcal{M}_{j+1}}, 
    \qquad \param_0^{(j+1)} := \Pi_{j, j+1}[ \param^{(j)} ]
\end{equation}
such that 
\begin{equation}
    \Psi_{j+1}\big(\cdot; \param_0^{(j+1)}\big)
        \equiv 
    \Psi_{j}\big(\cdot; \param^{(j)}\big).
\end{equation}

The idea of the cascadic multilevel method is to first apply the VMC algorithm at a computationally inexpensive coarse level $X_0$ resulting in parameters $\param^{(0)}$, using the prolongation $\Pi_{0,1} \param^{(0)}$ as initial guess for another VMC iteration at level $X_1$, and then to iterate the procedure. This is expressed in the following algorithm:

\begin{algorithm}[H]
	\caption{~~ Cascadic multilevel VMC algorithm}
	\label{algorithm-multilevel}
	\vskip 0.1cm
	\hspace*{0.02in} 
	{\bf Input:} levels $\{X_j\}_{j=0}^L$; iteration steps $\{M_j\}_{j=0}^L$;
	initial guess $\params_0^{(0)}$. 
	\begin{algorithmic}[1]
		\For{$j=0, 1, \dots, L$}
		\State 
		$\params_{M_j}^{(j)} \leftarrow $ Algorithm \ref{algorithm-sgd} with parameterisation $X_j$, initial guess $\params_0^{(j)}$, $M_j$ steps.
		\State
		Set $\params_0^{(j+1)} = \prj_{j,j+1} \params_{M_j}^{(j)}$ .
		\EndFor
	\end{algorithmic}
	\hspace*{0.02in} 
	{\bf Output:} 
	$\params=\params_{M_L}^{(L)}$ and the corresponding energy $\loss(\params)$.
\end{algorithm}

\begin{remark} The multilevel approach essentially handles the low-frequency and high-frequency components of the target wave function separately at different steps, so the same acceleration can be achieved as in the classical multilevel methods. A distinctive feature of the cascadic multilevel method is that the coarse function class can be discarded once they are refined, which considerably simplifies it algorithmically.
\end{remark}

\subsection{Hierarchy and prolongation for ACE wave functions}
It remains to specify the function class hierarchy and prolongation operator for the ACE-Vander-monde and ACE-Backflow wave function parameterisations that we introduced in the previous sections. 

Let $\{\aceB^{(j)}\}$ and $\{\pmb{\aceD}^{(j)}\}$ be
``increasing" truncation parameters satisfying
\begin{align*}
\aceB^{(j)}\leq\aceB^{(j+1)}
\qquad{\rm and}\qquad
\aceD^{(j)}_{\ell} \leq \aceD^{(j+1)}_{\ell}
\quad{\rm for}~\ell=1,\cdots,\aceB^{(j)} ,
\end{align*}
where at least one of the above `$\leq$' should be `$<$'.
Then we can specify the multilevel function classes $X_j$ as
\begin{align*}
    X_j := \Big\{ \Psi=\wfv :~ \wfv~
	{\rm can~be~written~as~\eqref{ace:vandfinal}~with~} \aceB=\aceB^{(j)}~{\rm and}~\pmb{\aceD}=\pmb{\aceD}^{(j)} 
	\Big\}
\end{align*}
if the wave functions are parameterized by ACE-Vandermonde ansatz, and
\begin{align*}
    X_j := \Big\{ \Psi=\wfbf :~ \wfbf~
	{\rm can~be~written~as~\eqref{ace:bf:final}~with~} \aceB=\aceB^{(j)}~{\rm and}~\pmb{\aceD}=\pmb{\aceD}^{(j)} 
	\Big\}
\end{align*}
if the wave functions are parameterized by ACE-backflow~ansatz.

When climbing from a low level $X_j$ to a higher level $X_{j+1}$, we include basis functions that have either higher polynomial degrees or higher correlation orders, so that some higher resolution components (higher frequency or correlation) are added to the basis set.
The corresponding prolongation operators $\varPi_{j,j+1}$ can be given explicitly in the following.

Let $\param^{(j)}$ be the parameters on the $j$-th level, then to embed it into a finer level as $\param^{(j+1)}$, we should define the prolongation $\varPi_{j,j+1}$ from the $j$-th to the $(j+1)$-th level.
The definition of prolongation is given for two cases: (1) $\aceB^{j+1} = \aceB^{j}$; and (2) $\aceB^{j+1} = \aceB^{j} + 1$.
Note that the case $\aceB^{j+1} - \aceB^{j} > 1$ can be obtained by the compositions of case (2).

For the first case $\aceB^{j+1} = \aceB^{j}$, we define $\varPi_{j,j+1}$ by 
\begin{align}
\label{prolongation:I}
& \param^{(j+1)} = \varPi_{j,j+1} \param^{(j)}
\qquad {\rm with}
\\[1ex]
& \qquad
\Big( \varPi_{j,j+1} \param^{(j)} \Big)_{\aceind}
:= \left\{
\begin{array}{ll}
	\param^{(j)}_{\aceind} & ~ {\rm if}~ \aceind\in \aceI(\aceB^{(j)},\pmb{\aceD}^{(j)})
	\\[1ex] \nonumber
	0 & ~ {\rm if}~ \aceind\in \aceI(\aceB^{(j+1)},\pmb{\aceD}^{(j+1)}) \backslash \aceI(\aceB^{(j)},\pmb{\aceD}^{(j)})
	\end{array}
	\right.  .
	\qquad
\end{align}

For the second case $\aceB^{j+1} = \aceB^{j} + 1$, since the index set $\aceI(\aceB^{(j)},\pmb{\aceD}^{(j)})$ is not a subset of $\aceI(\aceB^{(j+1)},\pmb{\aceD}^{(j+1)})$, we need first introduce two embeddings for the indices 
\begin{align*}
	& \eb_{\us}: \aceI(\aceB,\pmb{\aceD})\rightarrow \aceI(\aceB+1,\pmb{\aceD}), 
	\qquad 
	\eb_{\us}\big(\ind_1, \cdots, \ind_{\aceB}) = \Pa\Big( (0, \us), \ind_1, \cdots, \ind_{\aceB} \Big)
	\qquad{\rm and}
	\\[1ex]
	& \eb_{\ds}: \aceI(\aceB,\pmb{\aceD})\rightarrow \aceI(\aceB+1,\pmb{\aceD}), 
	\qquad  
	\eb_{\ds}\big(\ind_1, \cdots, \ind_{\aceB}) = \Pa\Big( (0, \ds), \ind_1, \cdots, \ind_{\aceB} \Big),
\end{align*}
where an additional permutation $\Pa$ is needed in the definition such that $\eb_{\us}(\aceind), \eb_{\ds}(\aceind) \in (\N\times \Z_m)_{{\rm ord}}^{\aceB+1}$. 
Then the prolongation $\varPi_{j,j+1}$ is defined by 
\begin{align}
\label{prolongation:II}
& \param^{(j+1)} = \varPi_{j,j+1} \param^{(j)} \qquad {\rm with}
\\[1ex] \nonumber
& \qquad 
\Big( \varPi_{j,j+1} \param^{(j)} \Big)_{\aceind}
:= 
\left\{
\begin{array}{ll}
    \displaystyle
	\frac{1}{N} \sum_{\aceind^{(j)}\in\aceI_{\aceind}} \params^{(j)}_{\aceind^{(j)}}
	& {\rm if}~ \aceI_{\aceind} \neq\emptyset
	\\[1ex] 
	0 & {\rm otherwise}
	\end{array}
	\right.  
\end{align}
for $\aceind\in \aceI\big(\aceB^{(j+1)},\pmb{\aceD}^{(j+1)}\big)$, where
\begin{align*}
\aceI_{\aceind} := \Big\{ \aceind^{(j)} \in \aceI\big(\aceB^{(j)},\pmb{\aceD}^{(j)}\big) ~:~ 
\eb_{\us}\big(\aceind^{(j)}\big) = \aceind ~ {\rm or}~ \eb_{\ds}\big(\aceind^{(j)}\big) = \aceind
\Big\} .
\end{align*}
By using \eqref{embeding}, we see that the prolongation $\varPi_{j,j+1} \param^{(j)}$ gives
\begin{align*}
& \Psi_{\param^{(j+1)}} 
= \sum_{\genfrac{}{}{0pt}{}{\aceind\in\aceI(\aceB^{(j+1)},\pmb{\aceD}^{(j+1)})}{\aceI_{\aceind} \neq\emptyset}} \frac{1}{N} \sum_{\aceind^{(j)}\in\aceI_{\aceind}} \params^{(j)}_{\aceind^{(j)}} \ace_{\aceind}
\\[1ex]
&\qquad
= \sum_{\aceind^{(j)}\in\aceI(\aceB^{(j)},\pmb{\aceD}^{(j)})} \param^{(j)}_{\aceind^{(j)}}\ace_{\aceind^{(j)}} 
\frac{1}{N}(A_{0,\uparrow} + A_{0,\downarrow})
= \sum_{\aceind^{(j)}\in\aceI(\aceB^{(j)},\pmb{\aceD}^{(j)})} \param^{(j)}_{\aceind^{(j)}}\ace_{\aceind^{(j)}}
= \Psi_{\param^{(j)}} ,
\end{align*}
which implies that the wave functions are exactly the same before and after the prolongation.

We see that the prolongations defined by both \eqref{prolongation:I} and \eqref{prolongation:II} can be easily and efficiently implemented in practice.

\section{Numerical experiments}
\label{sec:numerics} \setcounter{equation}{0} \setcounter{figure}{0}
In this section, we demonstrate the efficiency of our algorithm by simulations of several representative example systems.
All simulation results are given in atomic units (a.u.).
All simulations are performed on a workstation with 16 Intel Xeon W-3275M processors and 1T RAM, using the {\sc Julia} \cite{Julia} package {\tt ACESchrodinger.jl} \cite{ACESchrodinger}.

We will perform simulations for one-dimensional versions of second-row atoms, 
the molecule Lithium-Hydrogen (LiH), and Hydrogen chains (${\rm H}_n$).
To illustrate the accuracy of our results, we make comparisons with those from other standard quantum chemistry methods: (i) restricted Hartree--Fock (RHF), (ii) unrestricted Hartree--Fock (UHF) and (iii) Jastrow-Slater (JS).
In the Jastrow-Slater calculations, the wave functions are parameterized as a product of a Slater determinant of the form \eqref{hartree-fock} and a Jastrow factor.
The benchmarking Jastrow factors we used have the following form:
\begin{align*}
	& J(r_1,\sigma_1, \cdots, r_N,\sigma_N) 
	= \sum_{i=1}^N \chi(r_i)
	+ \sum_{i<j}\omega^{\sigma_i\sigma_j}\big(|r_i-r_j|\big)
	\qquad\qquad{\rm with}
	\\
	& \chi(r) = \sum_{I = 1}^M \xi_I\sqrt{1+|r-R_I|^2}
	\quad {\rm and} \quad
	\omega^{\sigma\zeta}(u) = \frac{a^{\sigma\zeta}}{\sqrt{1+u^2}} \big(1-\exp(-b^{\sigma\zeta}\sqrt{1+u^2}) \big) ,
\end{align*}
where $\xi_I~(I=1,\cdots,M)$ and $a^{\sigma\zeta},~b^{\sigma\zeta}~(\sigma,\zeta\in\{\us,\ds\})$ are tunable parameters.

To visualize the ground state solutions, we show the single-electron density $\rho(r)$ and pair-electron density $\rho_2(r,r')$, which are defined from the many-electron wave functions by
\begin{align*}
	\label{density}
	\rho(r) &= N \sum_{\sigma_1,\cdots,\sigma_N\in\Z_2}\int \big| \Psi(r,\sigma_1,r_2,\sigma_2, \dots,r_N,\sigma_N) \big|^2 {\rm d}r_2\dots ~{\rm d}r_N, \quad \text{and}  
	\\[1ex]
	\rho_2(r,r') &= \binom{N}{2}\sum_{\sigma_1,\cdots,\sigma_N\in\Z_2}\int \big| \Psi(r,\sigma_1,r',\sigma_2,r_3,\sigma_3,\dots,r_N,\sigma_N) \big|^2 {\rm d}r_3\dots ~{\rm d}r_N .
\end{align*}
The integrals in the above definition are evaluated by Monte-Carlo methods with Metropolis-Hastings sampling.

\vskip 0.2cm

\noindent{\bf Example 1: Atoms.}
We perform simulations for the 1d atoms in the second row of periodic table.
We first use the backflow model and compare the convergence of our cascadic multilevel algorithm with the pure AdamW method in \cref{fig:vmc:be}.
By taking the final correlation order $\aceB=2$ and polynomial degrees $\pmb{\aceD}=(16,16)$, it is observed that the multilevel method achieves significantly faster convergence than the direct AdamW approach. 
For the Beryllium atom ($N=4$), we can reach the energy -6.784 a.u. (with accuracy to 0.001 a.u.)
in 700 steps with the multilevel method, but about 1000 steps with the pure AdamW;
for the Oxygen atom ($N=8$), the multilevel method takes about 600 steps to reach the energy -21.692 a.u. (with accuracy to 0.005 a.u.), while the pure AdamW takes more than 1400 steps to achieve the same accuracy. We find that the improvement of the multilevel algorithm becomes increasingly significant as we increase the number of degrees of freedom for parameterisation.
For example, if we simulate the Neon atom with the Vandermonde model with $\aceB=3$,
a direct AdamW algorithm will fail to converge most of time, while the multilevel method can always converge to the ground state well; see \cref{table:vand:expl}. 

Moreover, we emphasize that the multilevel method not only requires fewer iteration steps to attain required accuracy, but also saves computational cost at early iterations when the wave functions are computationally much cheaper to evaluate.

\begin{figure}[tbhp]
	\centering
	\includegraphics[width=6.0cm]{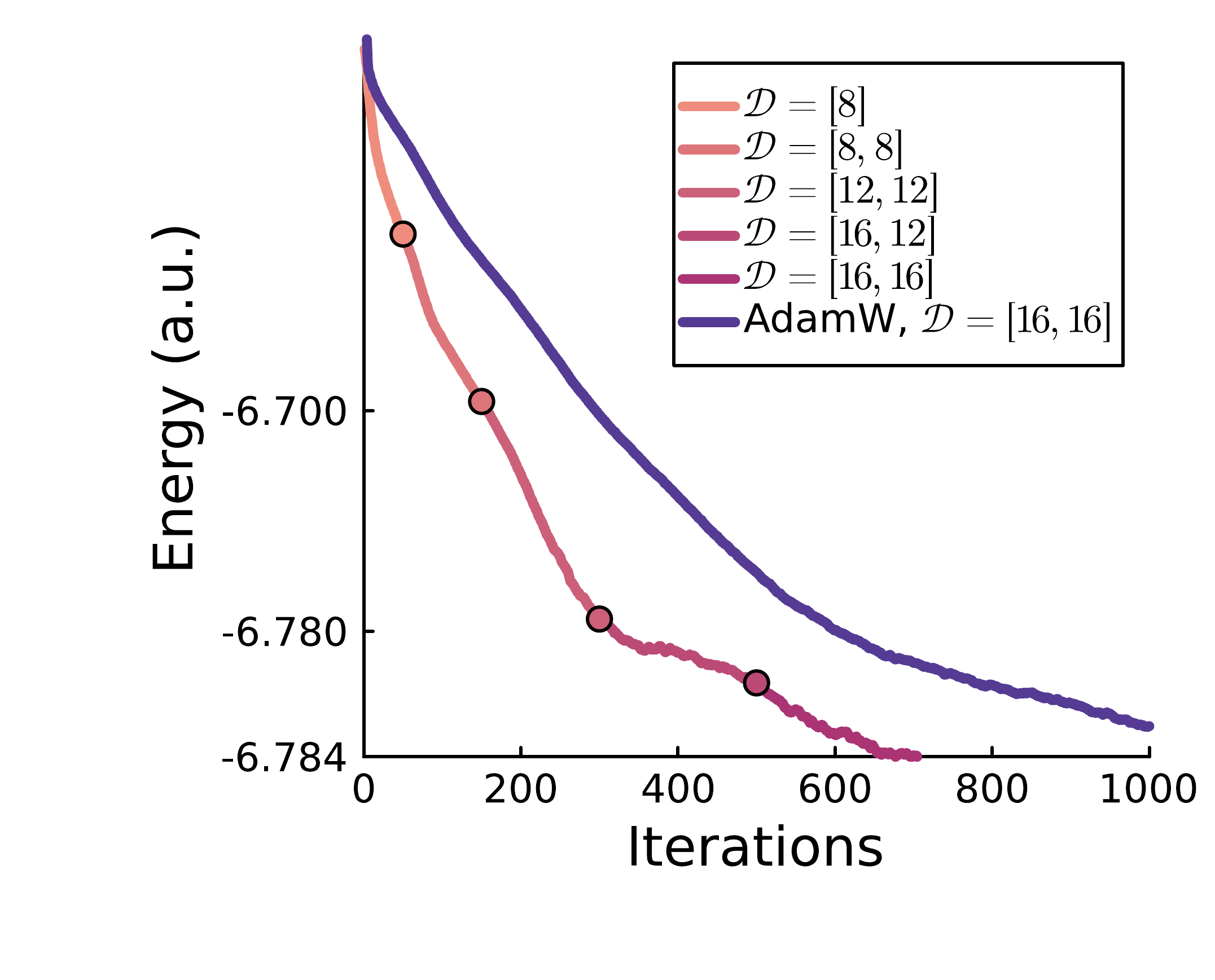}
	\includegraphics[width=6.0cm]{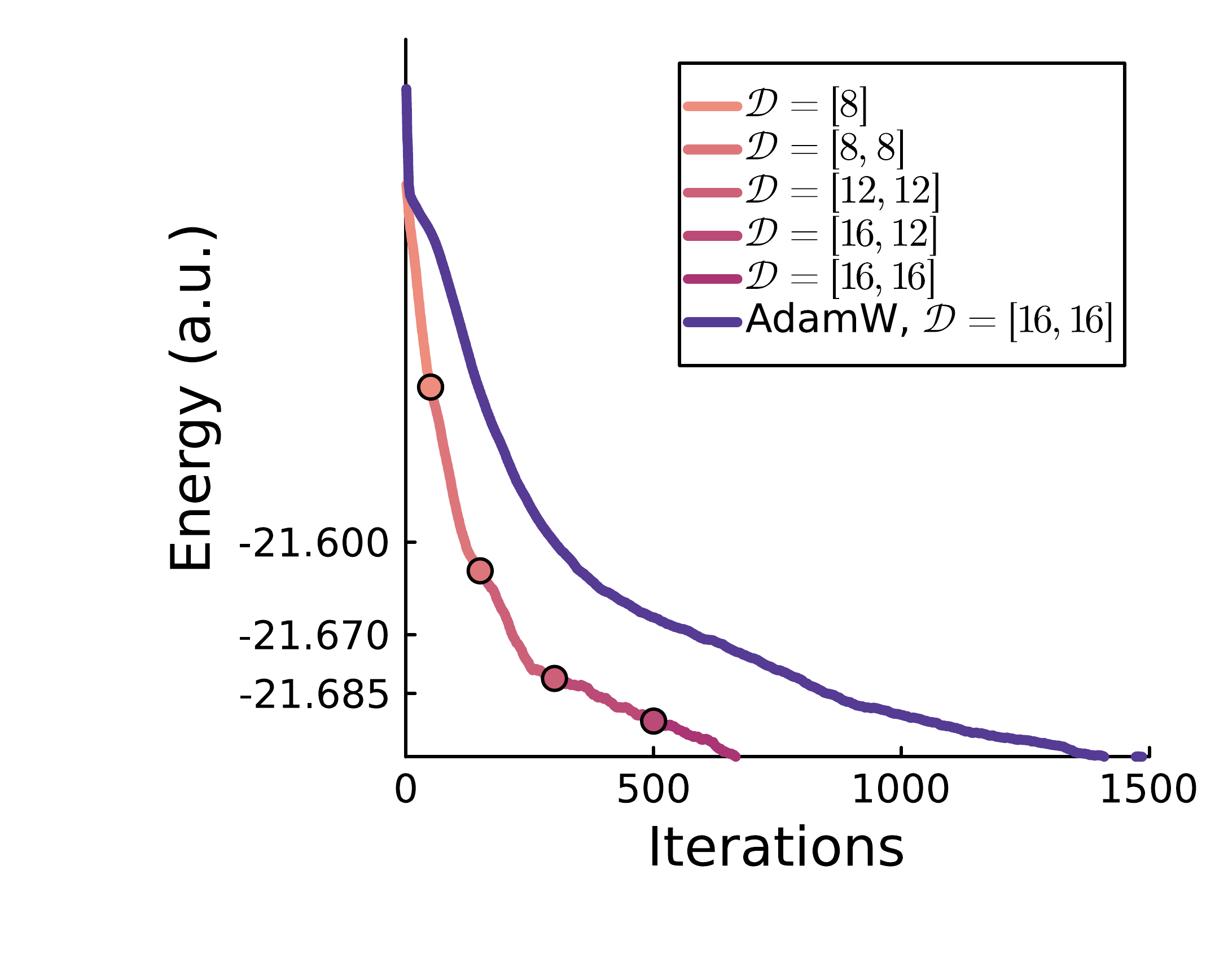}
	\caption{Convergence of the multilevel VMC algorithms with backflow model. 
		Left: Be; Right: O.}
	\label{fig:vmc:be}
\end{figure}

\begin{table}[tbhp]
	\label{table:vand:expl}
	\caption{Decay of the energy errors of the multilevel and pure AdamW methods with the Vandermonde model of Neon.
		The converged ground state energy is taken as the reference to calculate the errors.
	}
	\centering
	\begin{tabular}{cccccc}
		\toprule
		\multirow{2}*{Iteration} & \multicolumn{2}{c}{$\pmb{\aceD} = (32,16,8)$} & \multicolumn{2}{c}{$\pmb{\aceD} = (16,8,4)$}\\ 
		\cmidrule(lr){2-3} \cmidrule(lr){4-5}
		& AdamW (a.u.) & multilevel (a.u.) & AdamW (a.u.) & multilevel (a.u.)\\
		\midrule
		10 & 46.9158 & 23.0249 & 151.4131 & 17.3581\\
		20 & 1173.2703 & 21.3572 & 2368.7069 & 6.3844\\
		40 & 5172.9311 & 16.7712 & 6136.1001 & 5.0375\\
		80 & 4426.9050 & 9.9026 & 1542.2801 & 1.2489\\
		160 & 1351.9313 & 2.0020 & 3068.2906 & 0.1859\\
		320 & 38.2721 & 0.1626 & 12.1771 & 0.0395\\
		\bottomrule
	\end{tabular}
\end{table}

Next, we compare the accuracy of different parameterisations. In \cref{fig:err:be} we present the ground state energies with respect to the number of degrees of freedom. 
We observe that when reaching the same low energy, the backflow model requires more parameters than the Vandermonde model. On the other hand, the backflow ansatz needs lower correlation order and significantly smaller polynomial degrees to achieve comparable accuracy. Moreover, we observe that the backflow model achieves the same accuracy as the UHF model when the correlation order is 1, and outperforms the Jastrow-Slater model when the correlation order is 2.

\begin{figure}[tbhp]
	\centering
	\includegraphics[width=6.0cm]{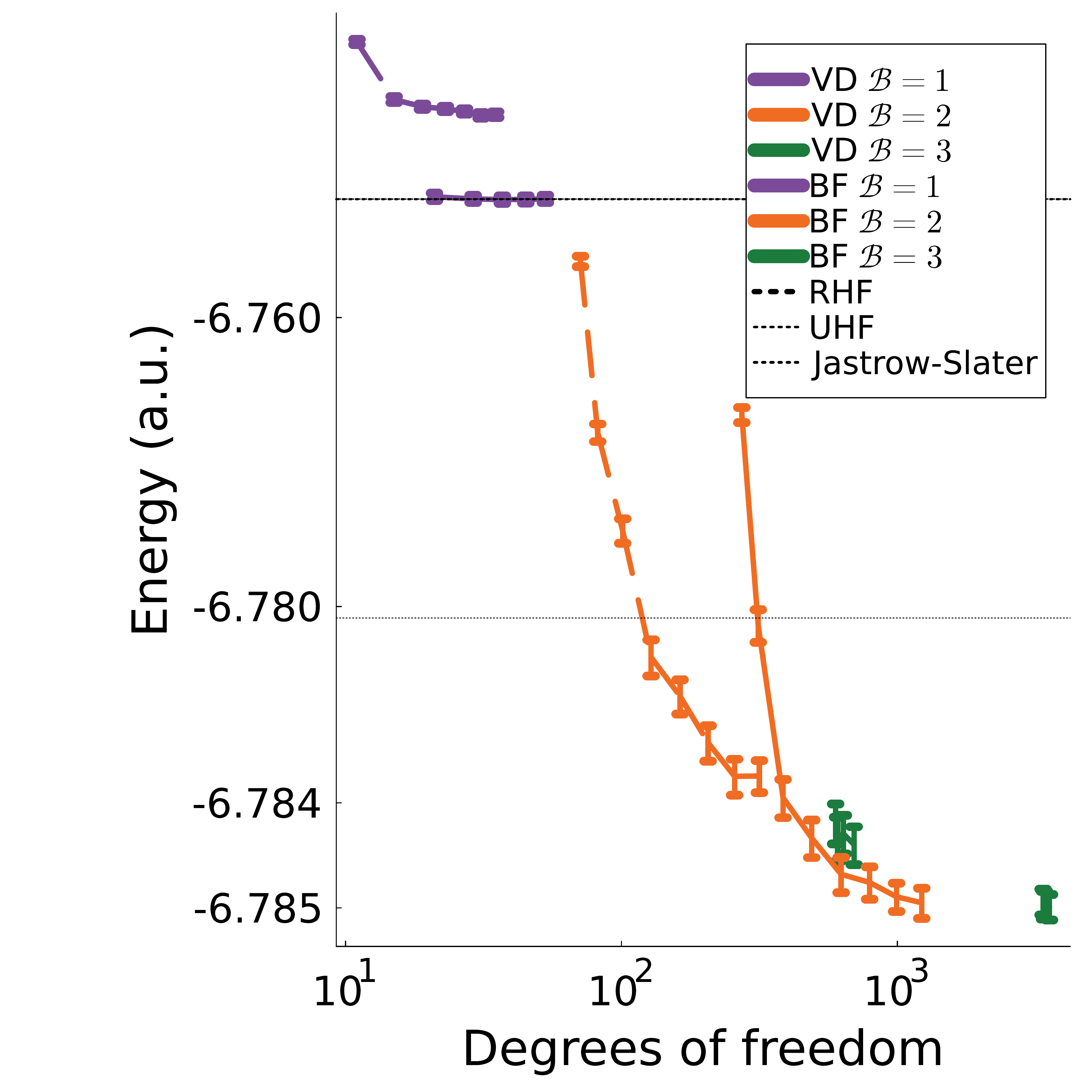}
	\includegraphics[width=6.0cm]{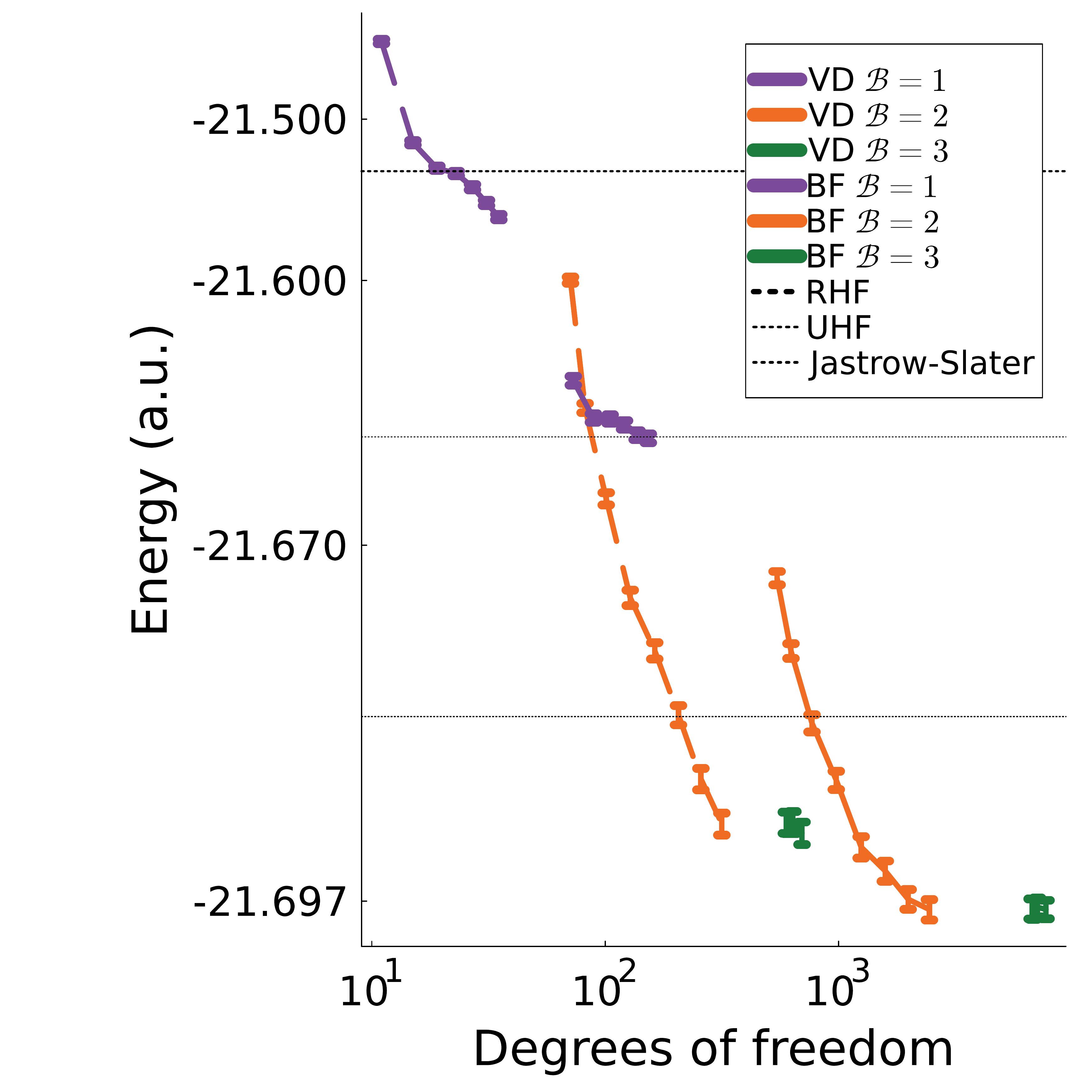}
	\caption{Convergence of the ground state energy with respect to the ACE degrees of freedom. 
		In the legends, ``BF" stands for the backflow model and ``VD" stands for the Vandermonde model.
		The error bars show the standard deviations in the reported energy.
		Left: Be; Right: O.
	}
	\label{fig:err:be}
\end{figure}

\begin{remark}
	Benchmarking the performance of a prototype code is not usually informative, hence we do not focus too much on that aspect. Nevertheless to substantiate our previous comment, 
	we present in \cref{table:timing} some timings of evaluating the wave function $\Psi_{\param}(\xx)$ and the operation of Hamiltonian $\big(\ham\Psi_{\param}\big)(\xx)$ at at given configuration $\xx$ for the Oxygen atom.
	This performance is tested on a MacBook Air with M2 chip.
	We make a further brief remark on the comparison between two models: using $\pmb{\aceD}=(16,8)$ for the Backflow ansatz (with $977$ parameters) and $\pmb{\aceD}=(16,14)$ for the Vandermonde ansatz (with $257$ parameters) the two methods achieved similar accuracy. The Backflow ansatz required $1.13$ times longer than the Vandermonde ansatz per optimization epoch. 
\end{remark}

\begin{table}[tbhp]
	\label{table:timing}
	\caption{Timing results ($\mu$s, serial) for evaluating $\Psi_{\param}(\xx)$ and $\big(\ham\Psi_{\param}\big)(\xx)$ at a given configuration $\xx$ of Oxygen.
		The corresponding degrees of freedom (dof) for each parameterisation are also given.
	}
	\centering
	\begin{tabularx}{\textwidth}{>{\centering\arraybackslash}X>{\centering\arraybackslash}c>{\centering\arraybackslash}X>{\centering\arraybackslash}X>{\centering\arraybackslash}c>{\centering\arraybackslash}X>{\centering\arraybackslash}X}
		\toprule
		\multirow{2}*{Degrees $\pmb{\aceD}$} & \multicolumn{3}{c}{Vandermonde-ACE} & \multicolumn{3}{c}{Backflow-ACE}\\ 
		\cmidrule(lr){2-4} \cmidrule(lr){5-7}
		& \# dofs & $\Psi_{\param}(\xx)$ & $\big(\ham\Psi_{\param}\big)(\xx)$ & \# dofs & $\Psi_{\param}(\xx)$ & $\big(\ham\Psi_{\param}\big)(\xx)$\\
		\midrule
		(32) & 67 & 2.53 & 85.0 & 265 & 7.41 &  81.5   \\
		(32,16) & 380 & 2.88 & 129 & 2961 & 16.9 & 152 \\
		(32,16,8) & 793 & 3.69 & 225 & 8633 & 28.6 & 276\\
		{\small (32,16,8,4)} & 1211 & 5.09 & 358 & 15137 & 41.7 & 411 \\
		\bottomrule
	\end{tabularx}
	\vskip 0.5cm
\end{table}

We further present in \cref{fig:density} the ground state single- and pair-electron densities.
We see that for this type of atom systems, the densities obtained by the backflow model are qualitatively similar to those by the RHF model.
However, there are clear differences in the single-electron densities for atoms Oxygen and Neon, as the backflow model can capture features beyond the Hartree--Fock model.

\begin{figure}[tbhp]
	\centering
	\includegraphics[width=12cm]{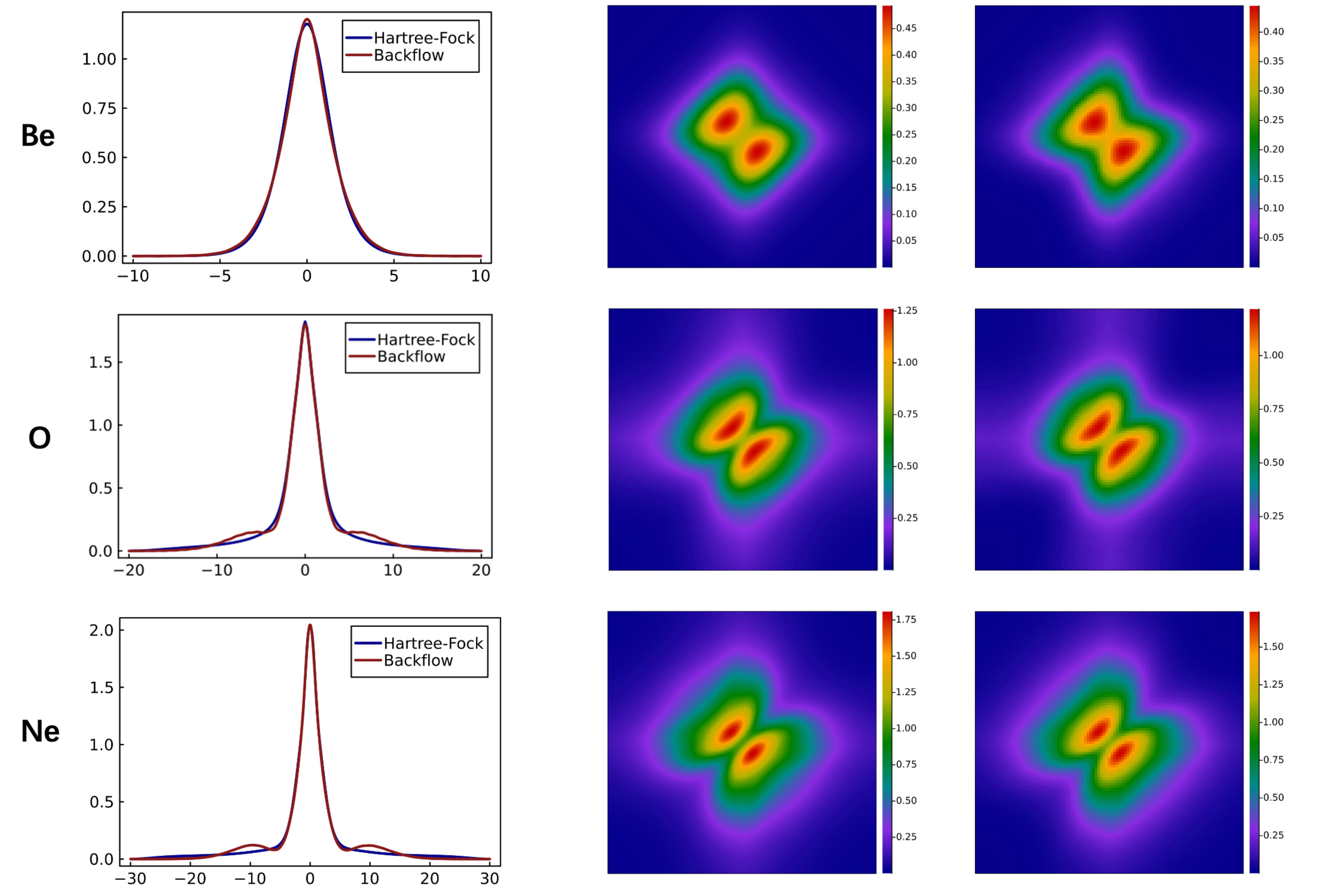}
	\caption{The ground state single- and pair- densities for the 1d atoms.
		Left: single-electron density; 
		Middle: pair-density by RHF;
		Right: pair-density by backflow.
	}
	\label{fig:density}
\end{figure}

\vskip 0.2cm

\noindent
{\bf Example 2: LiH.}
The LiH molecular has $M=2$ atoms (with nuclear charge 3 and 1 respectively) and $N=4$ electrons.
The two atoms are located at $R_1 = -d/2$ and $R_2 = d/2$ with $d=2.8$ a.u. the distance between the Lithium and Hydrogen atom. 
We present the ground state single- and pair-electron densities in \cref{fig:density:LiH}.
We see a clear left-right correlation from the picture of pair-electron density: when one electron is to the left of the origin, the probability of finding another electron favors the region on the right and vice versa; and the probability of finding two electrons to the right of the origin at the same time is very low.
\begin{figure}[tbhp]
	\centering
	\includegraphics[width=12.0cm]{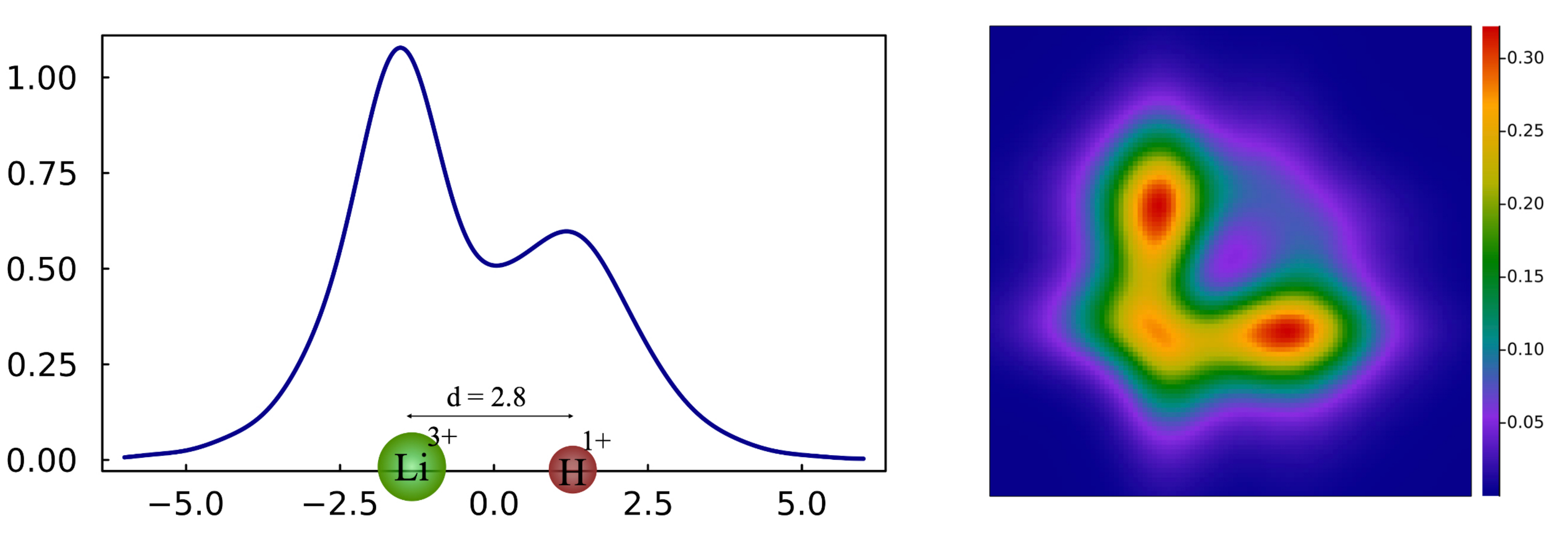}
	\caption{Left: The atomic configuration of (1d) LiH and the ground state single-electron density. Right: The ground state pair-electron density.}
	\label{fig:density:LiH}
\end{figure}

\vskip 0.2cm

\noindent
{\bf Example 3: Hydrogen chains.}
Finally, we consider Hydrogen chains $\text{H}_{n}$ with $n$ atoms and $n$ electrons.
The Hydrogen atoms are located at $R_k = - (n-1) d/2 - (k-1) d ~(k=1,2,\cdots,n)$, where $d$ stands for the separation distance (a.u.) between the atoms. 
We are interested in the dissociation limit of the chain, as the separation distance $d$ becomes large.
In that limit, the ground state of the system is expected to behave as $n$ isolated Hydrogen atoms.
Therefore, we use the ground state energy of a single Hydrogen atom, $E_1=-0.669777$ a.u. (obtained by solving the differential equation on a fine grid), as a reference value to check the accuracy of the simulation.
We mention that this is a typical strongly correlated system for which many approaches will fail to capture the right behavior \cite{cohen2012challenges,friesecke2022stronginteraction,wagner2012reference}.
The usual understanding from a quantum chemistry perspective of related problems is that of static correlation, which corresponds to an inherently multideterminental situation.
Of course there is a possibility of breaking the spin symmetry, e.g. by using the unrestricted Hartree--Fock model, which may be allowable at infinite separation, but this solution does not give the correct ground state for any other distance.

We will employ only the backflow model in this example.
We plot the dissociation curve of $H_8$ in \cref{fig:H8}, for both RHF and backflow models.
We observe that the backflow model gives an accurate ground state at the dissociation limit (by comparing with the reference $E_1$), while the RHF model displays qualitatively incorrect behavior as $d$ grows.
The ground state electron densities are also compared on the picture, from which one can again observe a clear qualitative deviation at large separation distances.
We show in \cref{table:energy} more simulation results with different atom number $n$ and separation distance $d$, and observe that the ACE-backflow model can achieve sufficient accuracy even at large separation distance $d$.
We mention that the atom-centered one-particle basis functions \eqref{basis_atom} are used for these simulations as the separation distance is quite large.

\begin{figure}[tbhp]
	\centering
	\includegraphics[width=12.0cm]{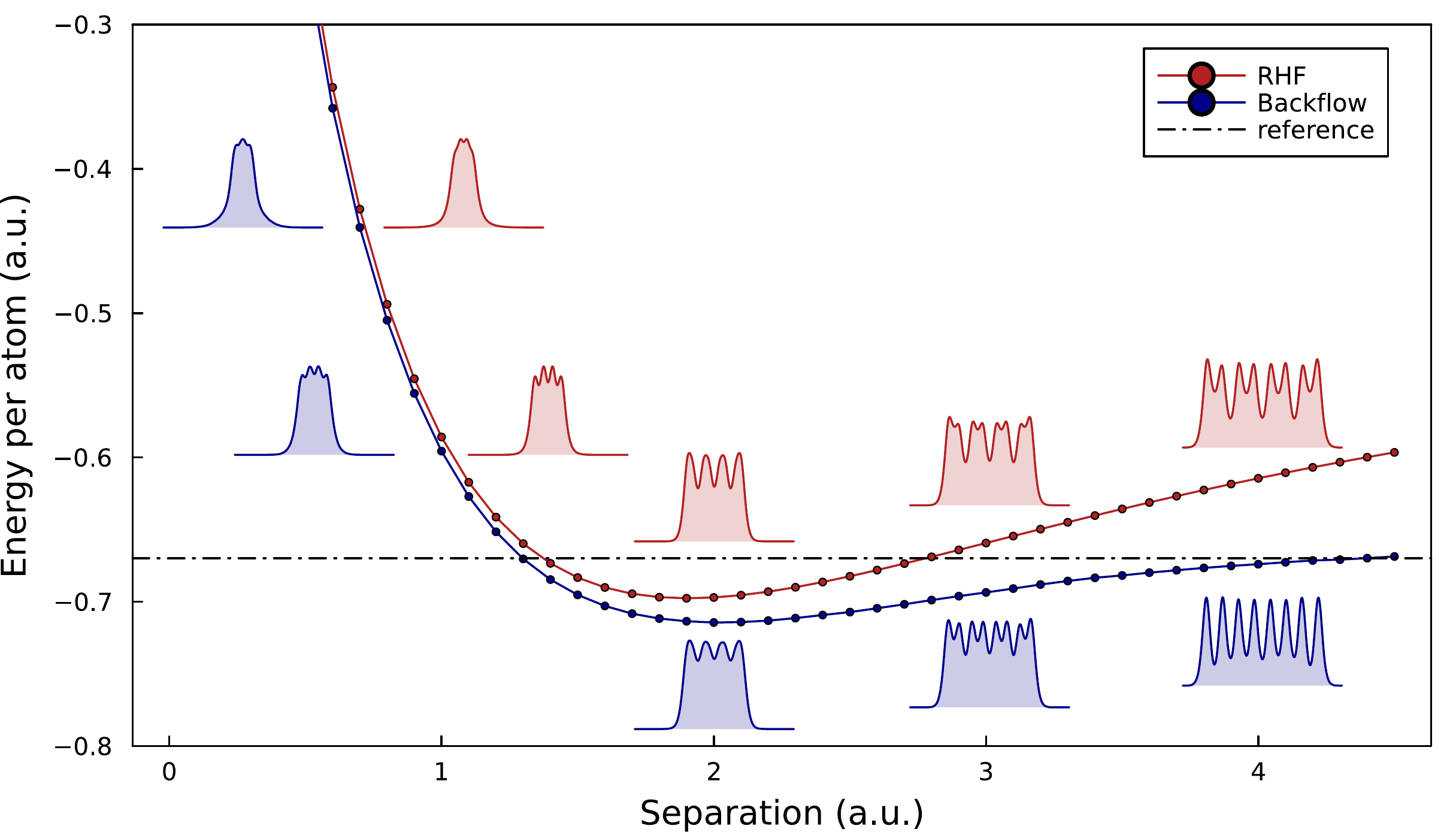}
	\caption{The dissociation curves of $H_{\text{8}}$. 
		The reference is given by the ground state energy of a single Hydrogen atom.
	}
	\label{fig:H8}
\end{figure}

\begin{table}[tbhp]
	\caption{Ground state energy (per atom) for Hydrogen chains with $n$ atoms.}
	\label{table:energy}  
	\centering 
	\vskip 0.2cm
	\begin{tabularx}{13cm}{cccccc}   
		\toprule
		Separation (a.u.) & ${\rm H}_2$ (a.u.) & ${\rm H}_4$ (a.u.) & ${\rm H}_6$ (a.u.) & ${\rm H}_8$ (a.u.) & ${\rm H}_{10}$ (a.u.) 
		\\
		\midrule
		12.0 & -0.6697(6) & -0.6697(5) & -0.6697(3) &-0.6697(1)  &-0.6697(1)
		\\
		20.0 & -0.6697(7) & -0.6697(6) & -0.6697(7) & -0.6697(7) & -0.6697(7)
		\\
		\bottomrule
	\end{tabularx} 
\end{table}

\section{Conclusions}
\label{sec:conclusion}
We developed a cascadic multilevel VMC method for solving many-electron Schr\"{o}dinger equations based on anti-symmetric ACE architectures that come with a natural multilevel structure.
We demonstrate numerically in typical one-dimensional electron systems that our approach has good performance and a systematically improvable accuracy.
It remains to demonstrate that our framework is equally performant in realistic three-dimensional systems with standard Coulomb interactions, which will require more careful construction of the one-particle basis employed in the ACE wave function architectures.
It will be particularly interesting to explore how for strongly correlated systems the performance and accuracy of our ACE parameterisations can be maintained.

\appendix
\section{Completeness of the parameterisations}
\label{sec:proofs}
In this appendix, we present analytic results to support the ``completeness'' of our parameterisations of anti-symmetric functions. The proofs will also highlight that the backflow ansatz is much more general than the Vandermonde ansatz.
Our analysis conceptually follows \cite{cauchy1815vand,han2019dnn,hutter2020representanti}. 
For the sake of simplicity, we will only discuss the spin-assign wave function $\wfs$ in \eqref{wf:spin:ass}. The generalisation to those with spin variables is straightforward. 
We emphasize that the spin assigned wave function $\wfs : \R^N\rightarrow \R$ has a different anti-symmetric requirement since it is only a function of spatial variables.
The anti-symmetric constraint for the full wave function $\Psi(\xx)$ is reduced to the following condition for for $\wfs(\rr)$
\begin{equation}
\label{eq:vand_anti-symmetric}
	\wfs(\rr^{\us},\rr^{\ds}) = (-1)^{\Pa^{\us}}(-1)^{\Pa^{\ds}} \wfs\big(\Pa^{\us}(\rr^{\us}),\Pa^{\ds}(\rr^{\ds})\big)
	\qquad\forall~\Pa^{\us},\Pa^{\us} ,
\end{equation}
where $\Pa^{\us}$ and $\Pa^{\ds}$ represent permutations of $\rr^{\us}$ and $\rr^{\ds}$, respectively.

\begin{theorem}
\label{theorem:vandermonde}
If $\wfs:\R^N\rightarrow\R$ satisfies \eqref{eq:vand_anti-symmetric}, then it can be expressed as
\begin{equation}
\label{eq:vand_spin-assign}
\wfs(\rr^{\us},\rr^{\ds}) = \Phi_{\rm s}(\rr^{\us}, \rr^{\ds})\cdot \mathcal{V}_{\rm s}(\rr^{\us},\rr^{\ds}),
\end{equation}
where $\mathcal{V}_{\rm s}(\rr^{\us},\rr^{\ds}) = \prod\limits_{1 \leq i < j \leq \nus} 
(r_i-r_j) \cdot \prod\limits_{\nus+1 \leq i < j \leq N} (r_i-r_j)$ and $\Phi_{\rm s}$ satisfies
\begin{equation}
\label{eq:ace-anti-symmetric}
\Phi_{\rm s}(\rr^{\us},\rr^{\ds}) = \Phi_{\rm s}\big(\Pa^{\us}(\rr^{\us}),\Pa^{\ds}(\rr^{\ds})\big)
\qquad\forall~\Pa^{\us},\Pa^{\ds} .
\end{equation}
Moreover, if $\wfs$ is a polynomial or analytic function, then so is $\Phi_{\rm s}$.
\end{theorem}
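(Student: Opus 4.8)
The plan is to reduce the two-sector anti-symmetry to a single classical divisibility statement and then to handle the division by the Vandermonde factor. The group acting on $\wfs$ is the direct product $S(\nus)\times S(\nds)$, acting independently on the spin-up block $\rr^{\us}=(r_1,\dots,r_{\nus})$ and the spin-down block $\rr^{\ds}=(r_{\nus+1},\dots,r_N)$, and the target factor $\mathcal{V}_{\rm s}$ splits as $V_{\nus}(\rr^{\us})\cdot V_{\nds}(\rr^{\ds})$ with $V_k(\mathbf y):=\prod_{1\le i<j\le k}(y_i-y_j)$. It therefore suffices to establish the one-block lemma: if $g:\R^k\to\R$ is anti-symmetric, i.e. $g(\Pa(\mathbf y))=(-1)^{\epsilon_{\Pa}}g(\mathbf y)$ for all $\Pa\in S(k)$, then $g=V_k\cdot h$ with $h$ symmetric, and $h$ inherits the regularity (polynomial or analytic) of $g$. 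Applying this in the up-block (with the down-block frozen) extracts $V_{\nus}(\rr^{\us})$ and leaves a factor symmetric in the up-variables; that factor is still anti-symmetric in the down-variables, so a second application extracts $V_{\nds}(\rr^{\ds})$ and leaves $\Phi_{\rm s}$, which is then symmetric in both blocks, giving \eqref{eq:vand_spin-assign}.

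First I would record the vanishing property: anti-symmetry under the transposition $(i\,j)$ gives $g=-g$ on the coincidence hyperplane $\{y_i=y_j\}$, so $g$ vanishes identically there. In the polynomial case this closes the argument: by the factor theorem, vanishing on $\{y_i=y_j\}$ forces $(y_i-y_j)\mid g$; the linear forms $(y_i-y_j)$ are pairwise non-associate irreducibles, so by unique factorisation their product $V_k$ divides $g$, and $h:=g/V_k$ is a polynomial. Symmetry of $h$ is immediate, since under any transposition both $g$ and $V_k$ change sign, hence $h$ is invariant, and transpositions generate $S(k)$.

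The analytic case is the crux, and where I expect the main difficulty. Near a \emph{generic} point of a single hyperplane $\{y_i=y_j\}$ (no other coordinates coinciding) I would change variables to $u=y_i-y_j$ and apply Hadamard's lemma, i.e. Taylor's theorem with integral remainder in $u$: since $g$ vanishes at $u=0$, one writes $g=u\,\tilde g$ with $\tilde g$ jointly analytic, and because the remaining factors of $V_k$ are nonvanishing there, the quotient $h=g/V_k$ is analytic on the open dense set where at most one equality holds. The remaining coincidence strata (two or more equalities) form a set of codimension $\ge 2$; on it I would invoke a removable-singularity argument, establishing local boundedness of $h$ from the order of vanishing of $g$ and then applying Riemann's extension theorem (after complexification, if convenient), to conclude that $h$ extends to a globally analytic function. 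An alternative to keep in reserve is an inductive scheme that divides out one factor $(y_i-y_j)$ at a time, checking at each stage that the new quotient stays anti-symmetric in the appropriate remaining variables so that the next Hadamard step applies.

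The hardest step is thus the analytic division across the higher-codimension coincidence locus: the codimension-one part is routine via Hadamard's lemma, but ruling out a genuine singularity on the lower strata and upgrading a merely continuous extension to an analytic one requires the removable-singularity input. Once the one-block lemma is in hand, assembling the two sectors into \eqref{eq:vand_spin-assign} and verifying the full invariance \eqref{eq:ace-anti-symmetric} of $\Phi_{\rm s}$ follows directly from the direct-product structure of the symmetry group.
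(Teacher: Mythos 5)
Your proposal is correct, and on the polynomial case it matches the paper's own argument exactly (factor theorem for each coincidence hyperplane, unique factorisation to divide out the full product, symmetry of the quotient because numerator and denominator change sign together). On the analytic case, which you rightly identify as the crux, you take a genuinely different route from the paper. The paper argues globally and algebraically: it expands $\wfs$ in a multivariate Taylor series $\sum_{\mathbf{k}} c_{\mathbf{k}}\rr^{\mathbf{k}}$, applies the anti-symmetrisation operator $AS$ term by term (using that $AS[\wfs]$ is proportional to $\wfs$), invokes the polynomial factorisation on each term to write $AS[\rr^{\mathbf{k}}] = U_{\mathbf{k}}\cdot \mathcal{V}_{\rm s}$ with $U_{\mathbf{k}}$ a partially symmetric polynomial, and concludes $\Phi_{\rm s} = \sum_{\mathbf{k}} c_{\mathbf{k}} U_{\mathbf{k}}$. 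Your route is local-analytic: Hadamard's lemma handles division by a single factor $(r_i-r_j)$ near generic points of one coincidence hyperplane, and a Riemann removable-singularity argument after complexification extends the quotient across the codimension-$\geq 2$ strata (note that for complex-analytic exceptional sets of codimension $\geq 2$ the extension theorem needs no boundedness hypothesis, so that part of your plan can be simplified). What each approach buys: the paper's argument avoids complex analysis entirely, but it tacitly assumes a convergent global power-series representation of $\wfs$ and asserts, rather than proves, the convergence and analyticity of the divided series $\sum_{\mathbf{k}} c_{\mathbf{k}} U_{\mathbf{k}}$; your argument requires several-complex-variables machinery but treats exactly these points rigorously, applies to any real-analytic $\wfs$ (which need not admit a single globally convergent expansion), and delivers joint analyticity of $\Phi_{\rm s}$ without series estimates. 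Your block-by-block reduction (extract $V_{\nus}$ with $\rr^{\ds}$ frozen, check the quotient remains anti-symmetric in $\rr^{\ds}$, repeat) is also structurally cleaner than the paper's simultaneous treatment of both blocks; just be sure to state the one-block lemma in parametric form, i.e.\ with joint analyticity in the frozen variables, which your Hadamard step in the single variable $u=y_i-y_j$ with the remaining coordinates as analytic parameters does provide, and to note that the anti-symmetry identity \eqref{eq:vand_anti-symmetric} propagates to the holomorphic extension by uniqueness of analytic continuation so that the extension vanishes on the complex, not merely the real, coincidence hyperplanes.
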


\begin{proof}
It is evident that $\mathcal{V}_{\rm s}$
satisfies the same anti-symmetry condition as $\wfs$. Therefore, as a quotient of two anti-symmetric functions, $\Phi_{\rm s}$ satisfies \eqref{eq:ace-anti-symmetric}.

Furthermore, if $\wfs$ is a polynomial satisfying \eqref{eq:vand_anti-symmetric}, we can use the factor theorem to show that $r^{\sigma}_i-r^{\sigma}_j~(\sigma \in{\uparrow, \downarrow})$ divides $\wfs(\rr^{\us},\rr^{\ds})$.
By the unique factorization property of multivariate polynomials, we can write $\wfs$ as
\eqref{eq:vand_spin-assign}, 
where $\Phi_{\rm s}(\rr^{\us},\rr^{\ds})$ is a polynomial with ``partial'' symmetry \eqref{eq:ace-anti-symmetric}.
Moreover, we can estimate the degree of $\Phi_{\rm s}$ by observing that the product of all $r^{\sigma}_i - r^{\sigma}_j~(\sigma \in{\uparrow, \downarrow})$ has degree $N_{\sigma}(N_{\sigma}-1)/2$. Therefore, the degree of $\Phi_{\rm s}(\rr^{\us},\rr^{\ds})$ is no more than ${\rm deg}(\wfs)-\nus(\nus-1)/2-\nds(\nds-1)/2$.
	
Next we show that, if $\wfs$ is analytic, then $\Phi_s$ is also analytic: Assuming $\wfs$ is analytic, we can write its multivariate Taylor series expansion as 
\begin{equation*}
	\wfs(\rr) = \sum\limits_{\mathbf{k}} c_{\mathbf{k}} \rr^{\mathbf{k}}, 
\end{equation*}
where $\mathbf{k} = (k_1, \ldots, k_N) \in \mathbb{N}^{N}_0$,  $\rr^{\mathbf{k}} := r_1^{k_1}\cdots r_N^{k_N}$, and $c_{\mathbf{k}} \in \R$ are the coefficients of the expansion. By anti-symmetrizing both sides using anti-symmetrization operator $AS$, 
\begin{equation*}
AS[\rr^{\mathbf{k}}] := 
\sum_{\Pa^{\us}}\sum_{\Pa^{\ds}}
(-1)^{\epsilon_{\Pa^{\us}}}  (-1)^{\epsilon_{\Pa^{\ds}}} 
\Pa^{\us}(r_1)^{k_1} \cdots \Pa^{\us}(r_{\nus})^{k_{\nus}}  \Pa^{\ds}(r_{\nus+1})^{k_{\nus}+1} \cdots \Pa^{\ds}(r_N)^{N}, 
\end{equation*}
we get $AS[\wfs(\rr)] = \sum_{\mathbf{k}} c_{\mathbf{k}} AS[\rr^{\mathbf{k}}]$. Using the result above, we know that $AS[\rr^{\mathbf{k}}]$ can be represented as 
\begin{equation*}
AS[\rr^{\mathbf{k}}] = U_{\mathbf{k}}(\rr)\cdot
\mathcal{V}_{\rm s}(\rr^{\us}, \rr^{\ds}),
\end{equation*}
where $U_{\mathbf{k}}(\rr)$ is a ``partial'' symmetric polynomial. Substituting this expression into the anti-symmetrized Taylor series expansion, we get 
\begin{equation*}
AS[\wfs(\rr)] = \sum_{\mathbf{k}} c_{\mathbf{k}} U_{\mathbf{k}}(\rr) \cdot
\mathcal{V}_{\rm s}(\rr^{\us}, \rr^{\ds}).
\end{equation*}
This is of the form given in \eqref{eq:vand_spin-assign}, with $\Phi(\rr) := \sum_{\mathbf{k}} c_{\mathbf{k}} U_{\mathbf{k}}(\rr)$, which is analytic, since $U_{\mathbf{k}}(\rr)$ is a ``partial" symmetric polynomial. 
\end{proof}

\begin{remark}[Differentiable functions]
If $\wfs:\R^N\rightarrow\R$ is $p$-times differentiable, i.e. $\wfs\in C^p(\R^N)$, then dividing $\wfs$ by the factor $r^{\us}_i-r^{\us}_j~(1\leq i<j\leq \nus)$ gives rise to a $N$-variable function in $C^{p-1}$.
We can recursively divide out $r_\ell - r_k$ and obtain a function in $C^{P-\nus(\nus-1)/2}(\R^N)$.
By repeating the same argument for the spin-down part, we can conclude that $\Phi\in C^k(\R^N)$ with $k = P-\nus(\nus-1)/2 - \nds(\nds-1)/2$.
\end{remark}

\begin{theorem}
\label{theorem:backflow}
Let $\wfs:\R^N\rightarrow\R$ satisfy \eqref{eq:vand_anti-symmetric}.
Then there exist orbitals $\varphi^i_s:\R^N\rightarrow\R~(1\leq i\leq N)$ satisfying
\begin{equation}
\label{eq: bf: symmetry}
\varphi^i_{\rm s}(r_j; \rr^{\us}_{\neq j}, \rr^{\ds}_{\neq j}) = \varphi^i_{\rm s}\Big(r_j; \Pa^{\us}(\rr^{\us}_{\neq j}), \Pa^{\ds}(\rr^{\ds}_{\neq j})\Big) 
\qquad\forall~\Pa^{\us},\Pa^{\us} , 
\end{equation}  
such that $\wfs$ and can be expressed in the form of 
\begin{align}
\label{eq:bf-spin-assign}
& \wfs(\rr) = \det \big(\Phi^{\us}(\rr)\big) \cdot \det\big(\Phi^{\ds}(\rr)\big) \qquad {\rm with}
\\[1ex] \nonumber
\Phi^{\sigma}(\rr) & = 
\begin{pmatrix}
\varphi^{1+\delta}_{\rm s}(r_{1+\delta};\rr_{\neq 1+\delta}) & \cdots & \varphi^{N_{\sigma}+\delta}_{\rm s}(r_{1+\delta};\rr_{\neq 1+\delta}) \\
\vdots & \ddots &\vdots\\
\varphi^{1+\delta}_{\rm s}(r_{N_{\sigma}+\delta};\rr_{\neq N_{\sigma}+\delta})  & \cdots &\varphi^{N_{\sigma}+\delta}_{\rm s}(r_{N_{\sigma}+\delta};\rr_{\neq N_{\sigma}+\delta}) 
\end{pmatrix}, 
\quad 
\sigma\in\{\us,\ds\} ,
\end{align}
where $\delta=0$ if $\sigma=\us$, and $\delta=\nus$ if $\sigma=\ds$.
Moreover, if $\wfs$ is a polynomial or analytic function, then so are the orbitals $\varphi^i_{\rm s}
~(1\leq i\leq N)$.
\end{theorem}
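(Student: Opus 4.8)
The plan is to bootstrap directly from \cref{theorem:vandermonde}, which already supplies the factorisation $\wfs = \Phi_{\rm s}\cdot\mathcal{V}_{\rm s}$ with $\Phi_{\rm s}$ ``partially'' symmetric in the sense of \eqref{eq:ace-anti-symmetric} and $\mathcal{V}_{\rm s}$ the product of the two same-spin Vandermonde determinants. Writing $\mathcal{V}_{\rm s} = \det(V^{\us})\cdot\det(V^{\ds})$, where $V^{\sigma}$ denotes the monomial Vandermonde matrix on the spin-$\sigma$ coordinates, the whole task reduces to absorbing the scalar prefactor $\Phi_{\rm s}$ into one of these two determinants so that all entries become admissible backflow orbitals. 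This is the precise mechanism by which backflow generalises the Slater/Vandermonde form.

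First I would define the spin-up orbitals by modifying only the degree-zero column of $V^{\us}$: set $\varphi^1_{\rm s}(r_j;\rr_{\neq j}) := \Phi_{\rm s}(\rr)$ and $\varphi^k_{\rm s}(r_j;\rr_{\neq j}) := r_j^{k-1}$ for $k=2,\dots,\nus$. Since $\Phi_{\rm s}(\rr)$ does not depend on which same-spin particle is highlighted, the first column of the resulting matrix $\Phi^{\us}$ is the constant vector $(\Phi_{\rm s},\dots,\Phi_{\rm s})^{\trans}$; multilinearity of the determinant then factors out $\Phi_{\rm s}$ and leaves exactly $\det(V^{\us})$, so that $\det(\Phi^{\us}) = \Phi_{\rm s}\cdot\det(V^{\us})$. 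For the spin-down block I would keep the unmodified monomials, $\varphi^{\nus+k}_{\rm s}(r_j;\rr_{\neq j}) := r_j^{k-1}$ for $k=1,\dots,\nds$, whence $\det(\Phi^{\ds}) = \det(V^{\ds})$. Multiplying the two determinants recovers $\Phi_{\rm s}\cdot\mathcal{V}_{\rm s} = \wfs$, which is the desired form \eqref{eq:bf-spin-assign}. I deliberately load all of $\Phi_{\rm s}$ into a single column rather than splitting it symmetrically across the two blocks, so as to avoid any square-root or sign ambiguity that would arise if $\Phi_{\rm s}$ changed sign.

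The only point that needs genuine care is verifying the partial-symmetry constraint \eqref{eq: bf: symmetry} for each orbital, and I expect this to be the main (though mild) obstacle. The monomial orbitals $r_j^{k-1}$ depend on $r_j$ alone and hence satisfy \eqref{eq: bf: symmetry} trivially. For the distinguished orbital $\varphi^1_{\rm s}=\Phi_{\rm s}$ the key observation is that the full same-spin symmetry \eqref{eq:ace-anti-symmetric} of $\Phi_{\rm s}$ restricts, upon holding $r_j$ fixed, to symmetry under permutations of the remaining coordinates $\rr^{\us}_{\neq j}$ and $\rr^{\ds}_{\neq j}$ separately, which is precisely \eqref{eq: bf: symmetry}. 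This is the step that genuinely exploits the fact that a backflow orbital is permitted to depend on all particles simultaneously.

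Finally, the regularity claim follows at once: \cref{theorem:vandermonde} already guarantees that $\Phi_{\rm s}$ is a polynomial (respectively analytic) whenever $\wfs$ is, and the remaining orbitals are monomials, so every $\varphi^i_{\rm s}$ inherits the regularity of $\wfs$.
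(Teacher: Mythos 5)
Your proposal is correct and takes essentially the same route as the paper: both bootstrap from \cref{theorem:vandermonde} and absorb the symmetric prefactor $\Phi_{\rm s}$ into a single column of the spin-up Vandermonde determinant via multilinearity of the determinant, keeping plain monomials in all other columns and in the spin-down block, and both verify partial symmetry and regularity exactly as you do. The only difference is cosmetic---you place $\Phi_{\rm s}$ in the constant first column, whereas the paper multiplies it into the second column as $\Phi_{\rm s}\cdot r_j$---and your choice is, if anything, marginally cleaner since it also covers the edge case $\nus=1$.
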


\begin{proof}
We will demonstrate that the Vandermonde ansatz \eqref{eq:vand_spin-assign} can be expressed in the form of the backflow parameterization \eqref{eq:bf-spin-assign}.
Recall that Theorem \ref{theorem:vandermonde} gives us an expression for $\wfs$ as follows:
\begin{align*}
	\wfs(\rr^{\us},\rr^{\ds}) = \Phi_{\rm s}(\rr^{\us}, \rr^{\ds}) \cdot \prod_{1\leq i<j\leq \nus}(r_i-r_j) \cdot \prod_{\nus+1\leq k<\ell\leq N}(r_k-r_\ell),
\end{align*}
where $\Phi_{\rm s}(\rr^{\us}, \rr^{\ds})$ satisfies \eqref{eq:ace-anti-symmetric}.
We can then rewrite the above equation in the form
\begin{align*}
\wfs(\rr^{\us},\rr^{\ds}) &= \left|
\begin{matrix}
  1 & \Phi_{\rm s}(\rr^{\us}, \rr^{\ds})\cdot r_1&\cdots & r_1^{\nus-1}\\
   \vdots & \vdots & \ddots & \vdots\\
1 & \Phi_{\rm s}(\rr^{\us}, \rr^{\ds})\cdot r_{\nus}&\cdots & r_{\nus}^{\nus-1}\\
\end{matrix}
\right| \cdot \left|
\begin{matrix}
    1 & r_{\nus+1}&\cdots & r_{\nus+1}^{\nds-1}\\
    \vdots & \vdots & \ddots & \vdots\\
    1 & r_{N}&\cdots & r_{N}^{\nds-1}\\
\end{matrix}
\right|.
\end{align*}
Consider functions $\varphi^i_{\rm s}(r_j;\rr^{\us}_{\neq j},\rr^{\ds}_{\neq j})$, defined as follows, 
\begin{align*}
\varphi^i_{\rm s}(r_j;\rr^{\us}_{\neq j},\rr^{\ds}_{\neq j}) = 
\begin{cases}
r_j\cdot \Phi(\rr^{\us}, \rr^{\ds}), &i=2\\
 r_j^{i-1}, &i\leq \nus, i\neq 2\\
r_j^{i-1-\nus}, & i>\nus
\end{cases} ,
\end{align*}
It is clear that functions $\varphi^i_{\rm s}$ satisfy a form of ``partial" symmetry, as defined in equation \eqref{eq: bf: symmetry}.
Moreover, if $\wfs$ is a polynomial, an analytical function, then $\varphi^i_{\rm s}$ possesses corresponding properties such as being a polynomial, or an analytical function. 
\end{proof}

\section{Laplacian implementation}
\label{sec:laplacian-implementation}
A key algorithmic component in the VMC algorithms is the efficient evaluation of the laplacian operator, followed by differentiation with respect to parameters to obtain the gradient of the loss. This involves three derivatives in total, hence requires a brief remark on its implementation. 

For the sake of simplicity we focus this section on the Backflow parameterisation, but the comments are generally applicable. Let $\xx \in (\R \times \Z_m)^N$ be the electrons with space and spin coordinate. Each forward evaluation of the wave function can be understood as a chain
\[
\begin{array}{rcccccccccl}
    (\R \times \Z_m)^N 
    &\rightarrow&  \R^{N \times K} 
    &\rightarrow& \R^{N \times K_1} &\rightarrow& \R^{N \times K_2} &\rightarrow& \R^{N \times N} &\rightarrow& \R. \\ 
    \xx &\mapsto& \le_K &\mapsto& A &\mapsto& \ace &\mapsto& \Phi &\mapsto& \Psi,
\end{array} 
\]
where $K, K_1, K_2$ denote, respectively, the number of basis functions in $L_K, A$ and ${\bm A}$. $\le_K$, $A$ and $\ace$ correspond to the arrays storing the basis function for the $N$ inputs $(x_i; \xx_{\neq i})$. The orbitals $\Phi$ are obtained by a matrix multiplication, $\Phi = C \cdot {\bm A}$ with $C$ the matrix of orbital parameters, and the wave function is then computed as $\log |\Psi| = \log{|\det{\Phi}|}$ for numerical stability.

The gradient $\nabla_{\xx}\Psi = \nabla \Psi$ can be calculated efficiently by backward differentiation, but for evaluating $\Delta \Psi$ we are unaware of a similar technique and 
can only make use of the forward-mode differentiation. Naively, this might involve computing the entire hessian $\nabla_{\xx}^2 \Psi$ as a forward-mode differentiation of $\nabla_{\xx}\Psi$ followed by taking the trace. Instead, we observe that it can be computed in a single forward pass storing and computing only minimal information, which appears to be more efficient in our implementation. However, this efficiency gain is only in the prefactor when compared to the naive approach of computing the full hessian.

We start by observing that 
\[
\Delta \Psi = \Phi^{-\trans} : \Delta \Phi - \sum_{i = 1}^N (\Phi^{-1} \frac{\partial \Phi}{\partial x_i})^{\trans} : (\Phi^{-1} \frac{\partial \Phi}{\partial x_i}),
\]
where $:$ is the matrix dot product. Thus, to obtain $\Delta_{\xx} \Psi$ we require $\Delta_{\xx} \Phi$ and the Jacobian $\nabla_{\xx} \Phi$. Arguing recursively, a single forward pass can be used to evaluate $\Psi, \nabla\Psi, \Delta\Psi$, resulting in the following chain of operations: 
\begin{equation*}
\xx \mapsto (\le, \partial_x \le_K, \partial_x^2 \le_K) \mapsto (A, \nabla A, \Delta A) \mapsto (\ace, \nabla \ace, \Delta \ace) \mapsto (\Phi, \nabla \Phi, \Delta \Phi) \mapsto \Delta\Psi.
\end{equation*}
The key observation is that the full hessians are never evaluated or stored, but this comes at a cost of computing the Jacobians in forward-mode.

The observations made above result in an efficient evaluating of the loss \eqref{loss}. Finally, to compute its gradient with respect to the parameters, we can now use highly efficient backward-mode differentiation. 

\section*{Acknowledgments}
We thank Zeno Sch\"{a}tzle and Juerong Feng for inspiring conversations on the topic of this article.


\bibliographystyle{siamplain}
\bibliography{main.bib}

\end{document}